\title{``Rewiring'' Filterbanks for Local Fourier Analysis: Theory and Practice} 
\author{Keigo Hirakawa,~\IEEEmembership{Member,~IEEE}, and Patrick J. Wolfe,~\IEEEmembership{Senior~Member,~IEEE}
\thanks{Based upon work supported in part by the National
Science Foundation under Grant No.~DMS-0652743.
        \newline \indent
K.~Hirakawa is with the Intelligent Signal Systems Laboratory, University of Dayton, College Park, Dayton, OH 45469 (e-mail: k.hirakawa@notes.udayton.edu);
and P.~J.~Wolfe is with the Statistics and
Information Sciences Laboratory, Harvard University, Oxford Street,
Cambridge, MA 02138 (e-mail: patrick@seas.harvard.edu).  
This work was completed while K.~Hirakawa
was a Research Associate at Harvard University.}
}
\newtheorem{theorem}{Theorem}
\newtheorem{property}{Property}[section]
\newtheorem{proposition}{Proposition}[section]
\newtheorem{corollary}{Corollary}[section]
\newtheorem{example}{Example}[section]
\newtheorem{definition}{Definition}[section]
\newtheorem{remark}{Remark}[section]
\newcommand{\qed}{\nobreak \ifvmode \relax \else
      \ifdim\lastskip<1.5em \hskip-\lastskip
      \hskip1.5em plus0em minus0.5em \fi \nobreak
      \vrule height0.75em width0.5em depth0.25em\fi}
\begin{document}

\maketitle

\begin{abstract}

This article describes a series of new results outlining equivalences between certain ``rewirings'' of filterbank system block diagrams, and the corresponding actions of convolution, modulation, and downsampling operators.  This gives rise to a general framework of reverse-order and convolution subband structures in filterbank transforms, which we show to be well suited to the analysis of filterbank coefficients arising from subsampled or multiplexed signals. These results thus provide a means to understand \emph{time-localized} aliasing and modulation properties of such signals and their subband representations---notions that are notably absent from the global viewpoint afforded by Fourier analysis. The utility of filterbank rewirings is demonstrated by the closed-form analysis of signals subject to degradations such as missing data, spatially or temporally multiplexed data acquisition, or signal-dependent noise, such as are often encountered in practical signal processing applications.

\end{abstract}
\begin{keywords}
Aliasing, likelihood methods, modulation, multiplicative noise, sampling, signal enhancement, time-frequency analysis, wavelets.
\end{keywords}

\section{Introduction}
\label{sec:introduction}

Since the earliest days of signal and waveform analysis, engineers have recognized the wide utility of parameterized families of \emph{filters}: convolution operators that are directly represented by finite-length sequences of real numbers.  Parallel banks of such finite-impulse-response filters, including short-time Fourier and wavelet transforms, have long been a canonical tool for analyzing signals, images, and other data sets that arise in a variety of applications across scientific fields\cite{ref:Mallat_1999,ref:Strang_1996}. The purpose of this article is to further expand filterbank theory and practice by developing a general framework of reverse-order and convolution subband structures in filterbank transforms.  It describes a series of new results outlining equivalences between certain ``rewirings'' of filterbank block diagrams, and the ``localized'' aliasing and modulation properties of sampled signals and their subband representations, which we describe below.

Sampled signals are typically acquired as linear functionals of the underlying data object of interest, which in turn is defined with respect to a continuous variable such as time or space. The actions of the convolution operators that comprise filterbanks are studied through their Fourier transforms, under the correspondence of element-wise multiplication on the dual group. Since sampling a continuous-time function periodizes its Fourier transform, however, care must be taken that no information is lost in the process. Indeed, in the absence of additional assumptions, it is not in general possible to recover signals that have \emph{aliased}; that is, signals whose Fourier transforms are supported on intervals so large that this periodization mixes distinct Fourier coefficients.

As the bandwidth of any function is directly determined by its \emph{global} smoothness, Fourier analysis does not lend itself to a meaningful analysis of signals whose smoothness varies and hence are not low-pass \emph{everywhere}. In contrast, parallel banks of convolution operators with finite support are fundamentally \emph{local} in nature.  It is well known, for instance, that the flexibility afforded by wavelets to adapt to the local regularity of functions is essential in yielding the sparsity properties necessary for effective signal and image analysis, as well as contemporary signal acquisition techniques such as compressed sensing\cite{ref:Donoho_2006,ref:Candes_2006a,ref:Candes_2006b,ref:Candes_2006c,ref:Haupt_2006,ref:Eldar_2009}.

Filterbanks are hence essential engineering tools for data analysis.  However, definitions of aliasing and frequency modulation in the global, Fourier context preclude the closed-form filterbank analysis of signals subject to missing data, spatially or temporally multiplexed acquisition, or signal-dependent noise effects. While it is well known how to apply filterbanks to analyze, modify, and enhance signals that are free from aliasing or modulatory effects, the literature presently lacks a unified filterbank theory for these settings.

In this work we develop a set of results necessary to fully understand and exploit the \emph{local} aliasing and modulation properties of sampled signals and their subband representations. Though our motivation stems from signal processing problems typically encountered in practice (such as those mentioned above, to which we return at the end of the article), our results are more general, showing equivalences between certain ``rewirings'' of filterbank block diagrams, and the corresponding actions of convolution, modulation, and downsampling operators. Our primary contributions are the introduction and analysis of two cardinal rewiring mechanisms---reverse-order subband structure (ROSS) and subband convolution structure (SCS)---by which filterbank subbands are coupled together to describe the relationship between localized aliasing, modulation, and convolution.

The framework we introduce is distinct from work involving signal recovery methods \cite{ref:Donoho_2006,ref:Candes_2006a,ref:Candes_2006b,ref:Candes_2006c,ref:Haupt_2006,ref:Eldar_2009} and sampling theorems \cite{ref:Walter_1992,ref:Xia_1993,ref:Vaidyanathan_2001,ref:Vaidyanathan_1988} 
in the extant literature. Such work has successfully characterized sufficient
conditions for exact reconstruction when filterbank
theory is used to restrict the class of signals under consideration, or to specify the fundamental compressibility of its members.  In
contrast, this article employs filterbank theory
to describe the data acquisition and sampling process itself, rather
than any properties of a given signal class.  Notions of localized aliasing and
localized modulation are
intimately connected with the ROSS and SCS analyses that we introduce below, and are also complementary to other well-understood concepts in filterbank analysis.

The article is organized as follows.  In the remainder of Section I we introduce key definitions and filterbank notation, and provide a simple example of local aliasing and local modulation that motivates our subsequent analysis. In
Section \ref{sec:ROSS} we introduce our first ``rewiring''
notion---that of reverse-order subband structure---and derive
corresponding expressions for the filterbank
coefficients corresponding to a subsampled signal.  In Section \ref{sec:SCS} we
build on this work to introduce the notion of
subband convolution structure---our second means of filterbank
rewiring---and show how it leads to a convolution
theorem particularly suited to the local modularity of the filterbank
transform. We conclude with a discussion in Section \ref{sec:likelihood} where we consider the practical use of these two notions in problems involving missing data, multiplexed signal acquisition, and signal-dependent noise.

\subsection{Key Definitions and Filterbank Notation}
\label{sec:time-frequency}
\begin{figure}
\subfigure[$x$]{\includegraphics[width=.158\textwidth]{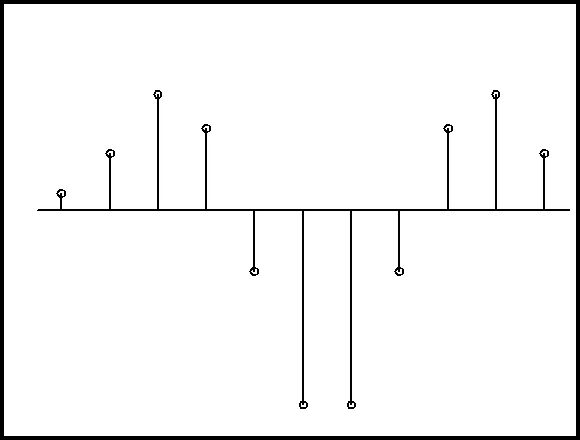}}
\subfigure[$x_m$]{\includegraphics[width=.158\textwidth]{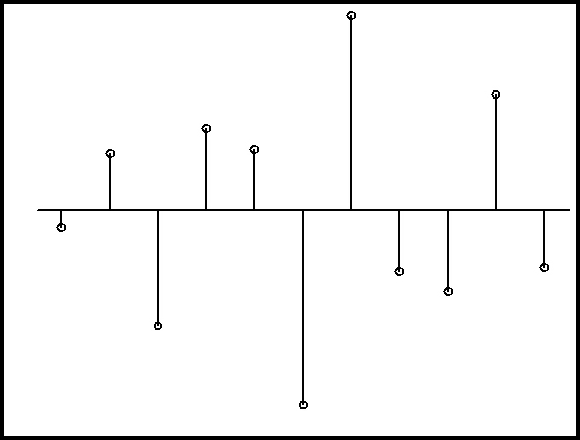}}
\subfigure[$x_s$]{\includegraphics[width=.158\textwidth]{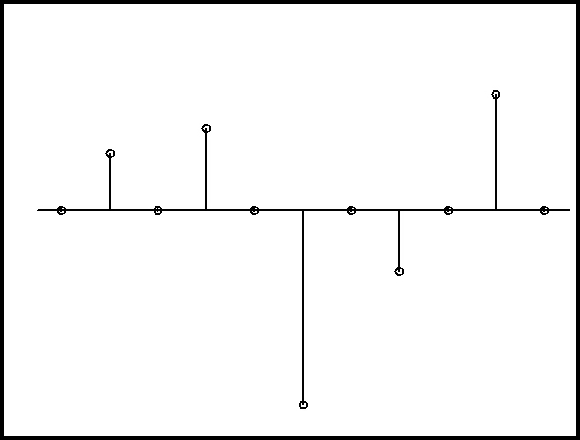}}
\subfigure[$\hat{x}$]{\includegraphics[width=.158\textwidth]{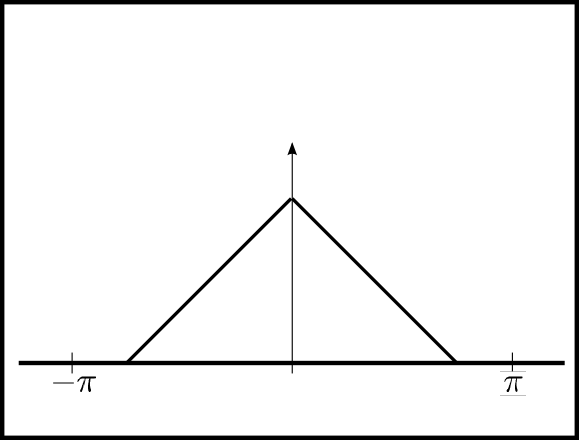}}
\subfigure[$\hat{x}_m$]{\includegraphics[width=.158\textwidth]{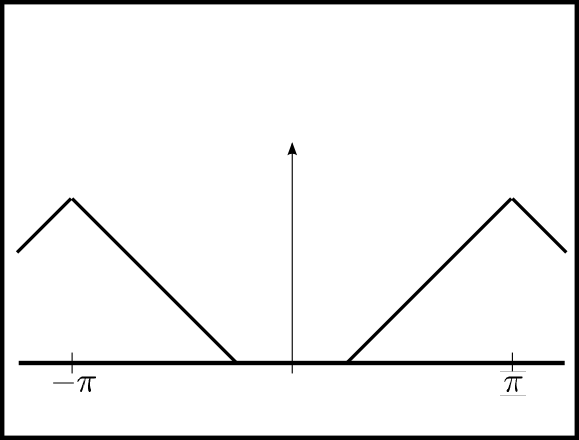}}
\subfigure[$\hat{x}_s$]{\includegraphics[width=.158\textwidth]{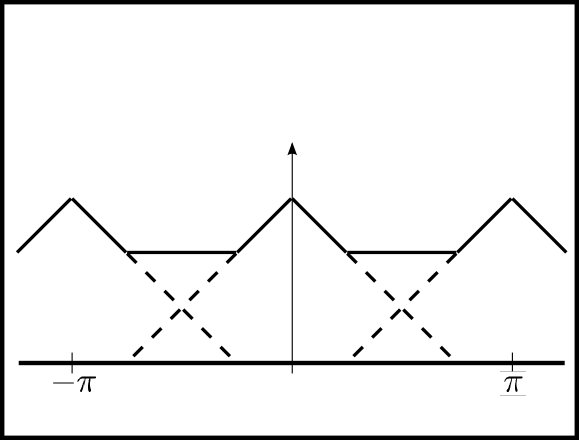}}
\caption{Pictorial illustrations of modulation and sampling in both the time (a-c) and frequency (d-f) domains.}
\label{fig:sampling}
\end{figure}
 
Throughout, let $x\in\ell^2(\mathbb{Z})$ be a real-valued sequence indexed by
$n\in\mathbb{Z}$. \emph{Subsampling} is the operation of replacing every odd-numbered element of $x$ by
zero, and hence the \emph{subsampled sequence} $x_s$ is defined element-wise as
\begin{align*}
x_s[n]&:=\begin{cases}x[n]&\text{if $n$ even,}\\0&\text{if $n$
    odd.}\end{cases}
\end{align*}
(Note that this is distinct from \emph{downsampling}, a dilation of the index set of
$x$ to yield $x[2n]$, such that odd-numbered samples are ``dropped'' and only
even-numbered ones retained.) 
Equivalently, $x_s[n]$ is an arithmetic average
of $x[n]$ and its frequency-\emph{modulated} version $x_m[n]:=(-1)^nx[n]\text{.}$:
\begin{align*}
x_s[n]=\frac{1}{2}\left(x[n]+x_m[n]\right)\text{.}
\end{align*}
Figures \ref{fig:sampling}(a)-(c) serve as a reminder to illustrate how
samples in $x$ and $x_m$ with opposite signs cancel out to yield
$x_s$; we shall frequently refer back to them later.

Let $\hat{x}$ denote the discrete-time Fourier transform of $x$, with $\omega \in \mathbb{R}/2\pi$ 
its corresponding normalized angular frequency. Then it follows that 
$\hat{x}_m(\omega)=\hat{x}(\omega-\pi)$ and
\begin{align}\label{eqn:FFT_aliasing}
\hat{x}_s(\omega)&=\frac{1}{2}\Big[\hat{x}(\omega)+\hat{x}(\omega-\pi)\Big]\text{.}
\end{align}
Here we see that when the bandwidth of $x$---i.e.,~the support of
$\hat{x}$---is sufficiently large, $\hat{x}(\omega)$ and 
$\hat{x}_m(\omega)$ are indistinguishable in $\hat{x}_s(\omega)$; as shown in Figures \ref{fig:sampling}(d)-(g), their supports
overlap in the Fourier domain. This phenomenon is called
\emph{aliasing}; in the absence of
additional information, aliased portions of $x$ cannot be recovered
from $\hat{x}_s(\omega)$ alone. 

\begin{figure}
\centering
\subfigure[One-level filterbank structure]{\includegraphics[scale=.25]{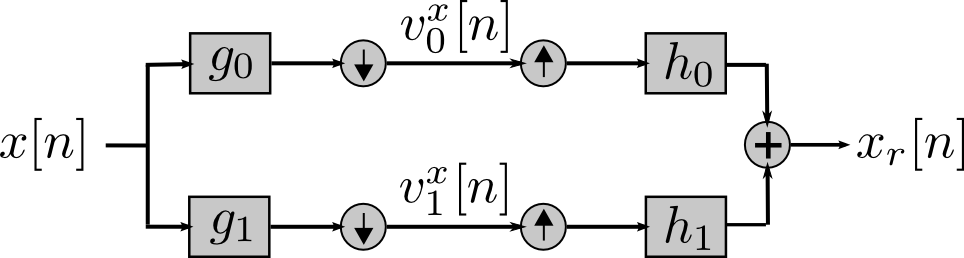}}
\subfigure[One-level complementary filterbank structure]{\includegraphics[scale=.25]{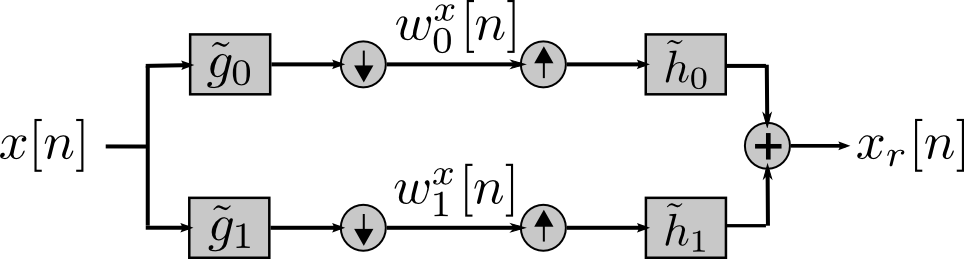}}
\caption{
One-level filterbank (a) and its complementary structure (b).
Diagram (a) represents analysis and synthesis filters $\{g_0,g_1\}$
and $\{h_0,h_1\}$, respectively, as well as filterbank coefficient
sets $\{v_0^x,v_1^x \}$ resulting from the action of the analysis
convolution operator and subsequent downsampling on a sequence $x$.
Reversing the downsampling procedure and applying the synthesis filters yields a ``reconstructed'' sequence $x_r$. 
Diagram (b) represents filters $\tilde{g}_i, \tilde{h}_i$ ``complementary'' to (a), obtained by swapping the roles of $g$ and $h$, respectively, and then applying an affine transformation.}
\label{fig:filterbank00}
\end{figure}

The Fourier transform is a fundamentally global operation; modulation and aliasing mix non-local information from the sequence $x$.  In contrast, a \emph{filterbank} maps a sequence $x$ to some alternative representation by way
of localized \emph{filterbank coefficients} (``analysis''), and subsequently yields a linear reconstruction $x_r$ (``synthesis''); well-known examples include short-time Fourier and wavelet representations. As the analysis operator acts
linearly, we write its action as an inner product, calling it a filter
when it is a convolution operator parameterized by translation along a
sublattice of $\mathbb{Z}$, as is the case considered here.
If this convolution operator is represented by the actions of a
finite-length, real-valued sequences in $\ell^2(\mathbb{Z})$, then we refer to
a real-valued, finite-impulse-response analysis filter $g$ and corresponding synthesis filter $h$. Figure
\ref{fig:filterbank00}(a) illustrates a basic filterbank structure, with
two analysis filters $\{g_0,g_1\}$ and two synthesis filters $\{h_0,h_1\}$. 
We denote by $\{v_0^x,v_1^x\}$ the corresponding filterbank coefficient sequences, defined as follows.
\begin{definition}[Filterbank Coefficient Sequence]\label{def:v}
We call $v^x_i\in\ell^2(\mathbb{Z})$ a one-level filterbank
coefficient sequence corresponding to $x$ if
\begin{align}\label{eqn:filterbank}
v^x_i[n]:=\Big(g_i[m]\star_m x[m]\Big)[2n]\text{,}
\end{align}
where the summation in the discrete convolution $\star_m$ is
performed over the index $m$, and the subsequent notion of downsampling by two is
reflected by the index set $\{2n:n\in\mathbb{Z}\}$.
\end{definition}
This composition of convolution and
dilation implies in turn that
\begin{align}\label{eqn:filterbank_FFT}
\hat{v}^x_i(\omega)=\frac{1}{2}\left[\hat{g}_i\left(\frac{\omega}{2}\right)\hat{x}\left(\frac{\omega}{2}\right)+
\hat{g}_i\left(\frac{\omega}{2}-\pi\right)\hat{x}\left(\frac{\omega}{2}-\pi\right)\right]\text{,}
\end{align}
and the set of transform coefficients $\{v^x_i[n]\}_{n\in\mathbb{Z}}$ is collectively
referred to as the $i$th filterbank subband. 
Typically $g_0$ and $h_0$ are smooth (i.e.,~low-pass) filters, while
$g_1$ and $h_1$ have zero average. Thus, $v^x_0$
provides a measure of local low-frequency energy concentration, while
$v^x_1$ captures local 
high-frequency energy, with temporal localization provided by the
finite support of $\{g_i,h_i\}_{i\in\mathbb{Z}_2}$. A
filterbank's joint time-frequency resolution can be fine-tuned by 
recursively nesting copies of the basic one-level transform structure
illustrated in Figure \ref{fig:filterbank00}(a), yielding the multi-level filterbank structures that we consider later in Section \ref{sec:multi-level}. 

Note that the Fourier representation of~\eqref{eqn:filterbank_FFT}
implies the superposition of shifted copies of the resultant filtered
spectra, which in general will give rise to aliasing of the type
illustrated in Figure \ref{fig:sampling}(f).  It is thus natural to ask
for conditions under which this aliasing will cancel---a prerequisite for the exact reconstruction of any input sequence $x \in
\ell^2(\mathbb{Z})$ from its filterbank coefficients, such that $x_r=x$ in the diagram of Figure \ref{fig:filterbank00}(a). To this end we 
arrive at the following well-known definition, which stems from \emph{global} properties of the Fourier transform. 

\begin{definition}[Perfect Reconstruction Filterbank] 
\label{def:prf}
A \emph{perfect reconstruction filterbank}
$\{g_i,h_i\}_{i\in\mathbb{Z}_2}$ admits for all $x\in\ell^2(\mathbb{Z})$ the relation
\begin{align}\label{eqn:inverse_filterbank}
\hat{x}_(\omega):=\hat{h}_0(\omega) \hat{v}^x_0(2\omega)+\hat{h}_1(\omega) \hat{v}^x_1(2\omega)=\hat{x}(\omega)\text{.}
\end{align}
Equivalently, as shown in Figure \ref{fig:filterbank00}(a), we have for all $n\in\mathbb{Z}$ that
\begin{align*}
x_r[2n]=&\Big(h_0[2m]\star_m v^x_0[m]\Big)[n]\\
&+\Big(h_1[2m]\star_m v^x_1[m]\Big)[n]=x[2n]\\
x_r[2n+1]=&\Big(h_0[2m+1]\star_m v^x_0[m]\Big)[n]\\
&+\Big(h_1[2m+1]\star_m v^x_1[m]\Big)[n]=x[2n+1]\text{.}
\end{align*}
\end{definition}
As described earlier, the sequences of operations corresponding to the forward transform step in \eqref{eqn:filterbank} and the reconstruction step in \eqref{eqn:inverse_filterbank} are commonly 
referred to as the \emph{analysis} and \emph{synthesis filterbanks},
respectively.
\begin{remark}[Haar Filterbank Transform] \label{rem:HFT}
Perhaps the most well-known example of a perfect reconstruction
filterbank is given by the so-called Haar transform, which may be
defined in terms of its $z$-transform as:
\begin{align}\label{eqn:HFT}
\begin{split}
\sum_n g_i[n]z^{-n}=&1+(-1)^iz\text{,}\\
\sum_n h_i[n]z^{-n}=&\frac{1}{2}\Big[(-1)^i+z^{-1}\Big]\text{.} 
\end{split}
\end{align}
Note that $g_0$ and $g_1$ in~\eqref{eqn:HFT} amount to the sum and difference of
neighboring samples, respectively, and it is clear that the original
sequence $x[n]$ is easily recoverable from the corresponding sequences
of filterbank coefficients. 
\end{remark}

Important and well-known results associated with Definition \ref{def:prf} established
the following (see, e.g.,~\cite{ref:Mallat_1999}) .
\begin{property}[Alias Cancellation Equivalence]\label{prpt:Vetterli} 
The set $\{g_i,h_i\}_{i\in\mathbb{Z}_2}$ is a perfect 
reconstruction filterbank if and only if
\begin{align}
2=\hat{g}_0(\omega)\hat{h}_0(\omega)&+\hat{g}_1(\omega)\hat{h}_1(\omega)\label{eqn:vetterli1}\\
0=\hat{g}_0(\omega)\hat{h}_0(\omega-\pi)&+\hat{g}_1(\omega)\hat{h}_1(\omega-\pi)\label{eqn:vetterli2}
\end{align}
\end{property}
\begin{property}[Analysis-Synthesis Symmetry]\label{prpt:Mallat}
If finite-impulse-response filters $\{g_i,h_i\}_{i\in\mathbb{Z}_2}$ comprise a perfect
reconstruction filterbank, then there exist $a\in\mathbb{R}\setminus \{0\}$ and
$b\in\mathbb{Z}$ such that
\begin{align*}
\hat{g}_i(\omega)&=(-1)^{1-i}ae^{j(2b+1)\omega}\hat{h}_{1-i}(\omega-\pi)\text{.}
\end{align*}
\end{property}
\begin{remark}
The condition of \eqref{eqn:vetterli2} in Property \ref{prpt:Vetterli} guarantees that the aliased components in $v^x_0$ and $v^x_1$ 
cancel, so the impulse response
of the overall filterbank structure in Figure
\ref{fig:filterbank00}(a) is equal to one-half of the expression of
\eqref{eqn:vetterli1}; i.e., everywhere constant and equal to unity.
Property \ref{prpt:Mallat} makes explicit the fact that an
analysis filterbank uniquely defines its corresponding synthesis filterbank.
These properties serve as the foundation for
the reverse-order subband filterbank structure that we introduce in
Section \ref{sec:ROSS} below.
\end{remark}

\subsection{Motivating Example for Filterbank ``Rewiring''}
\label{sec:toy}

The preceding section has served to introduce the basic notions of
filterbank theory that we shall employ here.  Before continuing, it
is instructive to consider a simple motivating example based on the
simplest case of the Haar filterbank transform.
In essence, we will see that ``rewiring'' filterbank diagrams such as
those in Figure \ref{fig:filterbank01} can be related to the actions of convolution, modulation, and downsampling operators. In subsequent sections we develop these properties formally, and show how they yield new insights into important practical problems.

To begin, consider the symmetric Hadamard matrix
\begin{equation*}
\bm{\Phi} = \begin{bmatrix}1 &1\\1 &-1\end{bmatrix}
\end{equation*}
which maps two-dimensional vectors to the corresponding sums and
difference of their components, and thus serves to define the
(unnormalized) one-level Haar filterbank.  Note that $\bm{\Phi}^{-1} = \frac{1}{2} \bm{\Phi}$,
and consider two systems of linear equations in $\bm{\Phi}$ that will serve to illustrate the concepts of reverse-order and convolution subband structure: 
\begin{align*}
\begin{bmatrix}q\\r\end{bmatrix}=&\bm{\Phi}\begin{bmatrix}a\\b\end{bmatrix},\quad
\begin{bmatrix}s\\t\end{bmatrix}=\bm{\Phi}\begin{bmatrix}c\\d\end{bmatrix}\text{.}
\end{align*}
\begin{example}[Reverse-Ordering and Subsampling]
Consider the first system of equations above, and suppose that we replace $b$ with $-b$, yielding a ``modulated'' version of $[a,b]^T$. We then observe that the Haar
transform of $[a,-b]^T$ results in a \emph{reverse-ordering} of $q$ and $r$, which play the roles of low-pass and high-pass components, respectively:
\begin{align*}
\begin{bmatrix}1&1\\1&-1\end{bmatrix}\begin{bmatrix}a\\-b\end{bmatrix}
=&\frac{1}{2}\begin{bmatrix}1&1\\1&-1\end{bmatrix}
\begin{bmatrix}1&1\\-1&1\end{bmatrix}\begin{bmatrix}q\\r\end{bmatrix}
=\begin{bmatrix}r\\q\end{bmatrix}\text{.}
\end{align*}
Since summing $[a,b]^T$ and its modulated version corresponds to subsampling, we next compute the Haar transform of $[a,0]^T$, and observe that this results in an arithmetic averaging of $q$ and $r$:
\begin{align*}
\begin{bmatrix}1&1\\1&-1\end{bmatrix}\begin{bmatrix}a\\0\end{bmatrix}
=&\frac{1}{2}\begin{bmatrix}1&1\\1&-1\end{bmatrix}\left(\begin{bmatrix}a\\b\end{bmatrix}+\begin{bmatrix}a\\-b\end{bmatrix}\right)=\frac{1}{2}\begin{bmatrix}q+r\\q+r\end{bmatrix}\text{.}
\end{align*}
We see from this simple example that the ``swapping'' and the ``combining'' of \emph{low-pass} and
\emph{high-pass} components are reminiscent of modulation and aliasing in
the traditional Fourier sense, as illustrated respectively in Figures \ref{fig:sampling}(e) and \ref{fig:sampling}(f).
\end{example}

\begin{example}[Convolution and Pointwise Multiplication]
Now consider the element-wise product of the vectors $[a,b]^T$ and $[c,d]^T$. 
The Haar transform of this product $[ac,bd]^T$ is:
\begin{align*}
\begin{bmatrix}1&1\\1&-1\end{bmatrix}\begin{bmatrix}ac\\bd\end{bmatrix}
=&\frac{1}{4}\begin{bmatrix}1&1\\1&-1\end{bmatrix}\begin{bmatrix}(q+r)(s+t)\\(q-r)(s-t)\end{bmatrix}\\
=&\frac{1}{4}\begin{bmatrix}(q+r)(s+t)+(q-r)(s-t)\\(q+r)(s+t)-(q-r)(s-t)\end{bmatrix}\\
=&\frac{1}{2}\begin{bmatrix}qs+rt\\qt+sr\end{bmatrix}\text{.}
\end{align*}
The symmetry of $qs+rt$ and $qt+sr$ suggests a kind of cyclic convolution of $[q,r]^T$ and $[s,t]^T$. 
In fact, we will see in Section \ref{sec:SCS} that our filterbank rewiring techniques recover precisely this notion of group structure, in direct analogy to \emph{global} Fourier analysis.
In Section \ref{sec:likelihood}, these ideas will reappear in the context of analysis of signals subject to multiplicative noise corruption.
\end{example}

\section{Reverse-Order Subband Structure and Localized Aliasing}
\label{sec:ROSS}

Having introduced the necessary definitions and given two brief examples, we now begin our technical development of filterbank ``rewiring.'' 
Bearing in mind the examples considered above, we introduce in Section
\ref{sec:complementary} below the notion of \emph{complementary
  filterbanks}, and then employ them to obtain the following results
in Section~\ref{sec:ROSS_theorems}: the \emph{reverse-ordering of 
subband structure} that results from modulation, and \emph{localized aliasing} that
stems from averaging the low- and high-frequency filterbank
coefficients. In Section~\ref{sec:multi-level} we extend these results
to the setting of multi-level filterbanks. 

\subsection{Complementary Filterbanks}
\label{sec:complementary}

\begin{definition}
\emph{(Complementary Filterbanks and Filterbank Coefficients):}\label{def:w}
Let $\{g_i,h_i\}_{i\in\mathbb{Z}_2}$ be a perfect reconstruction
filterbank. Then we define the \emph{complementary filterbank} $\{\tilde{g}_i,\tilde{h}_i\}_{i\in\mathbb{Z}_2}$ as follows:
\begin{align*}
\tilde{g}_i[n]:=&ah_i[n+(2b+1)]\\
\tilde{h}_i[n]:=&a^{-1}g_i[n-(2b+1)]\text{,}
\end{align*}
where $a$ and $b$ are chosen to satisfy Property \ref{prpt:Mallat} of perfect reconstruction filterbank, and we call $w_i^x[n]$ a one-level \emph{complementary filterbank coefficient} corresponding to a sequence $x$ if
\begin{align*}
w^x_i[n]&:=\Big(\tilde{g}_i[m]\star_m x[m]\Big)[2n]\text{.}
\end{align*}
\end{definition}
The following important property of complementary filterbanks follows
directly from Properties \ref{prpt:Vetterli} and \ref{prpt:Mallat} of perfect reconstruction filterbanks.

\begin{proposition}[Complementarity \& Perfect Reconstruction]\label{prop:complementary}
If the set
$\{g_i,h_i\}_{i\in\mathbb{Z}_2}$ is a perfect reconstruction
filterbank, then so is
$\{\tilde{g}_i,\tilde{h}_i\}_{i\in\mathbb{Z}_2}$.
\end{proposition}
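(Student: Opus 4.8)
The plan is to verify the two conditions of Property~\ref{prpt:Vetterli} (Alias Cancellation Equivalence) directly for the complementary filters $\{\tilde{g}_i,\tilde{h}_i\}_{i\in\mathbb{Z}_2}$, using the fact that $\{g_i,h_i\}_{i\in\mathbb{Z}_2}$ already satisfies them. The first step is to translate the time-domain definitions of Definition~\ref{def:w} into the Fourier domain. Since $\tilde{g}_i[n]=ah_i[n+(2b+1)]$ is a scaled shift of $h_i$, its discrete-time Fourier transform is $\hat{\tilde{g}}_i(\omega)=ae^{j(2b+1)\omega}\hat{h}_i(\omega)$; similarly $\hat{\tilde{h}}_i(\omega)=a^{-1}e^{-j(2b+1)\omega}\hat{g}_i(\omega)$. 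The key observation is that the linear-phase factors are reciprocal, so in any product $\hat{\tilde{g}}_i(\omega)\hat{\tilde{h}}_i(\omega)$ the factors $ae^{j(2b+1)\omega}$ and $a^{-1}e^{-j(2b+1)\omega}$ cancel exactly, leaving $\hat{g}_i(\omega)\hat{h}_i(\omega)$.

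With these expressions in hand, I would substitute into the right-hand side of \eqref{eqn:vetterli1} written for the tilded filters. The product $\hat{\tilde{g}}_i(\omega)\hat{\tilde{h}}_i(\omega)$ collapses to $\hat{h}_i(\omega)\hat{g}_i(\omega)=\hat{g}_i(\omega)\hat{h}_i(\omega)$, so the sum $\hat{\tilde{g}}_0\hat{\tilde{h}}_0+\hat{\tilde{g}}_1\hat{\tilde{h}}_1$ equals $\hat{g}_0\hat{h}_0+\hat{g}_1\hat{h}_1$, which is $2$ by the assumed condition \eqref{eqn:vetterli1}. Thus the first perfect-reconstruction condition for the complementary filterbank reduces immediately to that of the original.

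For the second condition \eqref{eqn:vetterli2}, I would form $\hat{\tilde{g}}_i(\omega)\hat{\tilde{h}}_i(\omega-\pi)$. Here the phase factors are $ae^{j(2b+1)\omega}$ from $\hat{\tilde{g}}_i(\omega)$ and $a^{-1}e^{-j(2b+1)(\omega-\pi)}$ from $\hat{\tilde{h}}_i(\omega-\pi)$, whose product is $e^{j(2b+1)\pi}=-1$ since $2b+1$ is odd. Therefore $\hat{\tilde{g}}_i(\omega)\hat{\tilde{h}}_i(\omega-\pi)=-\hat{h}_i(\omega)\hat{g}_i(\omega-\pi)$, and summing over $i\in\mathbb{Z}_2$ gives $-\big[\hat{g}_0(\omega-\pi)\hat{h}_0(\omega)+\hat{g}_1(\omega-\pi)\hat{h}_1(\omega)\big]$. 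To see that this vanishes, I would invoke \eqref{eqn:vetterli2} with the substitution $\omega\mapsto\omega-\pi$, which (using $\omega-2\pi\equiv\omega$) yields $\hat{g}_0(\omega-\pi)\hat{h}_0(\omega)+\hat{g}_1(\omega-\pi)\hat{h}_1(\omega)=0$. Hence the aliasing-cancellation condition holds for the complementary filterbank as well.

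The main obstacle, such as it is, lies in bookkeeping the linear-phase factors correctly—specifically in ensuring the two phase contributions combine to $-1$ (rather than $+1$) in the cross term, which is exactly where the oddness of $2b+1$ is essential. I would also want to confirm that the complementary filters remain finite-impulse-response and real-valued: this is immediate, since shifting by the integer $2b+1$ and scaling by the nonzero real $a$ preserves both properties, so Property~\ref{prpt:Mallat} applies and the constants $a,b$ guaranteed there are precisely the ones used in Definition~\ref{def:w}. Once both Fourier conditions of Property~\ref{prpt:Vetterli} are verified, the equivalence stated there immediately yields that $\{\tilde{g}_i,\tilde{h}_i\}_{i\in\mathbb{Z}_2}$ is a perfect reconstruction filterbank, completing the proof.
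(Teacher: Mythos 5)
Your proposal is correct and follows essentially the same route as the paper: pass to the Fourier domain via $\hat{\tilde{g}}_i(\omega)=ae^{j(2b+1)\omega}\hat{h}_i(\omega)$ and $\hat{\tilde{h}}_i(\omega)=a^{-1}e^{-j(2b+1)\omega}\hat{g}_i(\omega)$, then verify both conditions of Property~\ref{prpt:Vetterli} by direct substitution and invoke the stated equivalence. If anything you are more careful than the printed proof, which contains a typographical slip in its exponents ($e^{-j(2b+1)}$ with the frequency argument omitted) and asserts the cross-term cancellation without spelling out the factor $e^{j(2b+1)\pi}=-1$ from the oddness of $2b+1$ or the substitution $\omega\mapsto\omega-\pi$ (with $2\pi$-periodicity) in \eqref{eqn:vetterli2}, both of which you make explicit.
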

\begin{proof}
Appealing to Property \ref{prpt:Mallat}, we see that Fourier transforms of $\tilde{g}_i$ and $\tilde{h}_i$ respectively yield
\begin{align}
\begin{split}
\hat{\tilde{g}}_i(\omega)=&ae^{j(2b+1)\omega}\hat{h}_i(\omega)\label{eqn:complementary_filter}\\
\hat{\tilde{h}}_i(\omega)=&a^{-1}e^{-j(2b+1)\omega}\hat{g}_i(\omega)\text{.}
\end{split}
\end{align}
By substitution, we verify that \eqref{eqn:vetterli1} and \eqref{eqn:vetterli2}
hold for $\{\tilde{g}_i,\tilde{h}_i\}_{i\in\mathbb{Z}_2}$:
\begin{align*}
&\hat{\tilde{g}}_0(\omega)\hat{\tilde{h}}_0(\omega)+\hat{\tilde{g}}_1(\omega)\hat{\tilde{h}}_1(\omega)\\
&\quad=\Big(ae^{j(2b+1)\omega}\hat{h}_0(\omega)\Big)
\Big(a^{-1}e^{-j(2b+1)}\hat{g}_0(\omega)\Big)\\
&\quad\quad +
\Big(ae^{j(2b+1)\omega}\hat{h}_1(\omega)\Big)
\Big(a^{-1}e^{-j(2b+1)}\hat{g}_1(\omega)\Big)
=2\\
&\hat{\tilde{g}}_0(\omega)\hat{\tilde{h}}_0(\omega-\pi)+\hat{\tilde{g}}_1(\omega)\hat{\tilde{h}}_1(\omega-\pi)\\
&\quad=\Big(ae^{j(2b+1)\omega}\hat{h}_0(\omega)\Big)
\Big(a^{-1}e^{-j(2b+1)}\hat{g}_0(\omega-\pi)\Big)\\
&\quad\quad +
\Big(ae^{j(2b+1)\omega}\hat{h}_1(\omega)\Big)
\Big(a^{-1}e^{-j(2b+1)}\hat{g}_1(\omega-\pi)\Big)=0\text{.}
\end{align*}
Hence by Property \ref{prpt:Vetterli}, the set $\{\tilde{g}_i,\tilde{h}_i\}_{i\in\mathbb{Z}_2}$
comprises a perfect reconstruction filterbank.
\end{proof}
Figure \ref{fig:filterbank00}(b) illustrates this complementary filterbank
structure, along with the corresponding complementary coefficients
$w^x_i$. It is natural to ask if a filterbank can be its own complement, and to this end we have the following.
\begin{definition}[Self-Complementary Filterbank]
We call a filterbank $\{g_i,h_i\}_{i\in\mathbb{Z}_2}$
\emph{self-complementary} if
\begin{align}\label{eqn:self-complementary}
v^x_i[n]=(-1)^{1-i}w^x_i[n]\text{.}
\end{align}
\end{definition}
Returning now to Remark \ref{rem:HFT}, we note the following.
\begin{proposition}[Self-Complementarity of Haar Filterbank]\label{rem:self-complementary}
The Haar filterbank is self-complementary.
\end{proposition}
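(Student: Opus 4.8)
The plan is to reduce the coefficient-level identity \eqref{eqn:self-complementary} to an equivalent statement about the analysis filters alone, and then verify the latter by a direct Fourier-domain substitution. First I would observe that, by Definition \ref{def:w}, the complementary coefficients $w^x_i$ are precisely the two-fold downsampling of $\tilde g_i \star x$, exactly as $v^x_i$ in Definition \ref{def:v} is the downsampling of $g_i \star x$. Consequently the self-complementarity relation $v^x_i[n] = (-1)^{1-i} w^x_i[n]$ holds for every $x \in \ell^2(\mathbb{Z})$ if and only if the filters themselves satisfy $\tilde g_i[n] = (-1)^{1-i} g_i[n]$, equivalently $\hat{\tilde g}_i(\omega) = (-1)^{1-i}\hat g_i(\omega)$. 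The ``if'' direction is immediate since $\big((-1)^{1-i}\big)^2 = 1$; for ``only if'' I would either appeal to \eqref{eqn:filterbank_FFT} and the arbitrariness of $\hat x$ on the two aliased bands, or equivalently test against shifted unit impulses $x = \delta[\cdot-k]$ to conclude that $g_i - (-1)^{1-i}\tilde g_i$ vanishes identically.

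With this reduction in hand, the second step is to invoke the Fourier expression \eqref{eqn:complementary_filter} for the complementary filter, namely $\hat{\tilde g}_i(\omega) = a\,e^{j(2b+1)\omega}\hat h_i(\omega)$, so that the target identity becomes
\begin{align*}
a\,e^{j(2b+1)\omega}\,\hat h_i(\omega) = (-1)^{1-i}\,\hat g_i(\omega)\text{.}
\end{align*}
Here $a$ and $b$ are the constants furnished by Property \ref{prpt:Mallat} for the Haar filterbank, so it remains only to pin them down and check this single identity for $i \in \{0,1\}$.

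The third step is the explicit computation. Using the Haar Fourier transforms read off from Remark \ref{rem:HFT}, I would determine $a$ and $b$ by matching the two sides of Property \ref{prpt:Mallat}; a short calculation forces $b=0$ (so the modulation factor collapses to $e^{j\omega}$) together with a specific nonzero real value of $a$. Substituting these back, the identity $a\,e^{j\omega}\hat h_i(\omega) = (-1)^{1-i}\hat g_i(\omega)$ is verified directly, the essential simplifications being $(-1)^{1-i} = -(-1)^i$ and the interplay between the modulation factor $e^{j\omega}$ and the unit-delay term $e^{-j\omega}$ appearing in $\hat h_i$. This establishes $\hat{\tilde g}_i = (-1)^{1-i}\hat g_i$, hence \eqref{eqn:self-complementary}, and the proposition follows.

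I expect the main obstacle to be purely one of bookkeeping: the constants $a$ and $b$ must be chosen to satisfy Property \ref{prpt:Mallat} \emph{simultaneously} for both subbands $i=0$ and $i=1$, and the alternating factors $(-1)^i$ and $(-1)^{1-i}$ must be tracked with care, since a single sign slip will make the low-pass and high-pass verifications appear mutually inconsistent. Once the correct pair $(a,b)$ is fixed, the residual algebra is routine.
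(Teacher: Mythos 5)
Your proposal is correct, but it reaches the key filter-level identity by a genuinely different route than the paper. Both arguments reduce the coefficient relation \eqref{eqn:self-complementary} to the statement $\hat{\tilde{g}}_i(\omega)=(-1)^{1-i}\hat{g}_i(\omega)$ and both pass through \eqref{eqn:complementary_filter}; but you propose a direct computation, pinning down $(a,b)$ from Property \ref{prpt:Mallat} using the explicit Haar transforms and then verifying $a\,e^{j(2b+1)\omega}\hat{h}_i(\omega)=(-1)^{1-i}\hat{g}_i(\omega)$ by hand. The paper instead argues structurally: it first records the conjugate (linear-phase) symmetry $\hat{g}_i(\omega)=(-1)^ie^{j(2b+1)\omega}\hat{g}_i^*(\omega)$ of the Haar analysis filters, combines it with Property \ref{prpt:Mallat} to obtain $\hat{g}_i^*(\omega)=-a\hat{h}_{1-i}(\omega-\pi)$, and then invokes the Smith--Barnwell identity $\hat{g}_i(\omega)=(-1)^ie^{j(2b+1)\omega}\hat{g}_{1-i}^*(\omega-\pi)$ to conclude $\hat{g}_i=(-1)^{1-i}\hat{\tilde{g}}_i$. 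Your route buys concreteness and avoids Smith--Barnwell entirely; the paper's route isolates the two abstract properties (linear-phase symmetry plus the conjugate-quadrature identity) actually responsible for self-complementarity, which is more suggestive of when a general filterbank can be its own complement. Your anticipated bookkeeping hazard is real, and your constants are in fact the consistent ones: with $\hat{g}_i(\omega)=1+(-1)^ie^{j\omega}$ and the intended synthesis filters $\hat{h}_i(\omega)=\frac{1}{2}\bigl(1+(-1)^ie^{-j\omega}\bigr)$, matching Property \ref{prpt:Mallat} simultaneously for $i=0$ and $i=1$ forces $b=0$ and $a=-2$, after which $-2e^{j\omega}\hat{h}_i(\omega)=-\bigl(e^{j\omega}+(-1)^i\bigr)=(-1)^{1-i}\hat{g}_i(\omega)$ exactly as you predict. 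Do not be unsettled by the apparent conflict with the paper: as printed, \eqref{eqn:HFT} carries a sign error in $h_1$ (with the displayed filters, the left side of \eqref{eqn:vetterli1} evaluates to $2\cos\omega$ rather than $2$), and the proof's stated values $a=\frac{1}{2}$, $b=-1$ do not satisfy the symmetry it asserts (which holds with $b=0$); your pair $(a,b)=(-2,0)$ is the choice under which both Property \ref{prpt:Mallat} and the final verification go through for both subbands.
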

\begin{proof}
It follows from \eqref{eqn:HFT} that the Haar filterbank satisfies the following symmetry:
\begin{align*}
\hat{g}_i(\omega)=(-1)^ie^{j(2b+1)\omega}\hat{g}_i^*(\omega)\text{,}
\end{align*}
with $a=\frac{1}{2}$ and $b=-1$. Applying Property \ref{prpt:Mallat} of perfect reconstruction filterbanks in turn yields
\begin{align}\label{eqn:symmetry}
(-1)^ie^{j(2b+1)\omega}\hat{g}_i^*(\omega)=&(-1)^{1-i}ae^{j(2b+1)\omega}\hat{h}_{1-i}(\omega-\pi)\notag\\
\hat{g}_i^*(\omega)=&-a\hat{h}_{1-i}(\omega-\pi)\text{.}
\end{align}
The well-known identity 
$\hat{g}_i(\omega)=(-1)^ie^{j(2b+1)\omega}\hat{g}_{1-i}^*(\omega-\pi)$
of Smith and Barnwell
\cite{ref:Smith_1984} applies; and upon
substituting this into \eqref{eqn:symmetry}, we obtain the desired result:
\begin{align*}
\hat{g}_i(\omega)=&(-1)^ie^{j(2b+1)\omega}(-a\hat{h}_{i}(\omega))\\
=&(-1)^{1-i} \hat{\tilde{g}}_i(\omega)\text{.}
\end{align*}
\end{proof}
As we show below, complementary filterbanks play a key role in the reverse-ordering of subband structure induced by modulation.

\subsection{Reverse-Order Subband Structure}
\label{sec:ROSS_theorems}


Figure \ref{fig:filterbank01} illustrates the \emph{reversal of subband
ordering} that results when $x$ is modulated by $\pi$ to yield $x_m$: the
low-frequency filterbank coefficient for the modulated signal
($v^{x_m}_0[n]$) behaves like the high-frequency complementary filterbank
coefficient for the original signal ($w^{x}_1[n]$), and
vice-versa. As may be seen by comparing Figure \ref{fig:filterbank01} with Figure \ref{fig:sampling}, this filterbank subband ``role-reversal'' is consistent
with the Fourier interpretation of modulation by $\pi$; in both cases, the low- and high-frequency components are swapped, 
modulo-$2\pi$. We formalize this notion as follows:

\begin{theorem}[Reverse-Order Subband Structure (ROSS)]\label{thm:reverse-order}
If the set $\{g_i,h_i\}_{i\in\mathbb{Z}_2}$ is
a perfect reconstruction filterbank, then
\begin{align}\label{eqn:reverse-order}
v^{x_m}_i[n]=(-1)^iw^x_{1-i}[n]\text{.}
\end{align}
\end{theorem}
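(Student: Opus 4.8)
The plan is to verify the claimed identity in the Fourier (DTFT) domain and then invert, since both the downsampling relation~\eqref{eqn:filterbank_FFT} and the complementary-filter relation~\eqref{eqn:complementary_filter} are most naturally expressed there. Concretely, I would show that $\hat{v}^{x_m}_i(\omega)=(-1)^i\hat{w}^x_{1-i}(\omega)$ for all $\omega$; uniqueness of the DTFT then yields~\eqref{eqn:reverse-order} in the time domain.

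First I would expand the left-hand side. Applying~\eqref{eqn:filterbank_FFT} to $x_m$ and using the stated modulation rule $\hat{x}_m(\omega)=\hat{x}(\omega-\pi)$ together with the $2\pi$-periodicity of $\hat{x}$ (so that the argument $\omega/2-2\pi$ collapses to $\omega/2$), I obtain $\hat{v}^{x_m}_i(\omega)=\tfrac12\big[\hat{g}_i(\tfrac{\omega}{2})\hat{x}(\tfrac{\omega}{2}-\pi)+\hat{g}_i(\tfrac{\omega}{2}-\pi)\hat{x}(\tfrac{\omega}{2})\big]$. Separately, applying~\eqref{eqn:filterbank_FFT} with the complementary analysis filter $\tilde{g}_{1-i}$ expresses $\hat{w}^x_{1-i}(\omega)$ as the corresponding two-term sum in $\hat{\tilde{g}}_{1-i}$ and $\hat{x}$.

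The crux of the argument---and the step I expect to be the main obstacle---is to establish the single filter identity $\hat{\tilde{g}}_{1-i}(\omega)=(-1)^i\,\hat{g}_i(\omega+\pi)$. I would obtain this by combining the complementary-filter expression~\eqref{eqn:complementary_filter}, namely $\hat{\tilde{g}}_{1-i}(\omega)=a\,e^{j(2b+1)\omega}\hat{h}_{1-i}(\omega)$, with the analysis--synthesis symmetry of Property~\ref{prpt:Mallat} solved for $\hat{h}_{1-i}$. The delicate bookkeeping lies in the sign: shifting the argument by $\pi$ to align the two expressions produces a factor $e^{\pm j(2b+1)\pi}=-1$, since $2b+1$ is odd, and this must be combined with the parity factor $(-1)^{1-i}$ from Property~\ref{prpt:Mallat} so as to leave exactly $(-1)^i$. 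Getting these signs and the modulation-by-$\pi$ shifts to cancel correctly is where care is required.

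Finally I would substitute this filter identity into the two-term expression for $\hat{w}^x_{1-i}(\omega)$, factoring out the common constant $(-1)^i$ and using the $2\pi$-periodicity of $\hat{g}_i$ to identify $\hat{g}_i(\tfrac{\omega}{2}+\pi)=\hat{g}_i(\tfrac{\omega}{2}-\pi)$. The resulting two summands then match, term for term, the expression already derived for $\hat{v}^{x_m}_i(\omega)$, so that $(-1)^i\hat{w}^x_{1-i}(\omega)=\hat{v}^{x_m}_i(\omega)$; inverting the DTFT completes the proof. As a sanity check, one can verify the whole identity directly for the Haar filterbank of Remark~\ref{rem:HFT} via its self-complementarity (Proposition~\ref{rem:self-complementary}), which specializes~\eqref{eqn:reverse-order} to the clean statement $v^{x_m}_i=v^x_{1-i}$, consistent with the subband ``swap'' seen in the motivating example.
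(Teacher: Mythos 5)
Your proposal is correct and follows essentially the same route as the paper: both expand $\hat{v}^{x_m}_i(\omega)$ via~\eqref{eqn:filterbank_FFT} and the modulation rule, establish the key filter identity (your $\hat{\tilde{g}}_{1-i}(\omega)=(-1)^i\hat{g}_i(\omega+\pi)$ is the paper's $\hat{g}_i(\omega)=(-1)^i\hat{\tilde{g}}_{1-i}(\omega-\pi)$ under a trivial change of variable, with exactly the same $e^{-j(2b+1)\pi}=-1$ and $(-1)^{1-i}$ sign bookkeeping from Property~\ref{prpt:Mallat} and~\eqref{eqn:complementary_filter}), and then match the two-term sums using $2\pi$-periodicity. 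The only cosmetic difference is that you substitute into $\hat{w}^x_{1-i}(\omega)$ whereas the paper substitutes into $\hat{v}^{x_m}_i(\omega)$; the argument is otherwise identical.
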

\begin{proof}
Modulation of $x$ by $\pi$ implies that we have that
\begin{align*}
\hat{v}^{x_m}_i(\omega)&=\frac{1}{2}\left[\hat{g}_i\left(\frac{\omega}{2}\right)\hat{x}_m\left(\frac{\omega}{2}\right)+
\hat{g}_i\left(\frac{\omega}{2}-\pi\right)\hat{x}_m\left(\frac{\omega}{2}-\pi\right)\right]\\
&=\frac{1}{2}\left[\hat{g}_i\left(\frac{\omega}{2}\right)\hat{x}\left(\frac{\omega}{2}-\pi\right)+
\hat{g}_i\left(\frac{\omega}{2}-\pi\right)\hat{x}\left(\frac{\omega}{2}\right)\right]\text{.}
\end{align*}
By Property \ref{prpt:Mallat} of perfect reconstruction filterbanks and Definition \ref{def:w},
\begin{align}\label{eqn:g_h}
&\hat{g}_i(\omega)=(-1)^{1-i}ae^{j(2b+1)\omega}\hat{h}_{1-i}(\omega-\pi)\notag\\
&\quad=(-1)^i\hat{\tilde{g}}_{1-i}(\omega-\pi)\notag\\
&\hat{v}^{x_m}_i(\omega)\notag\\
&\quad=\frac{(-1)^i}{2}\left[
\hat{\tilde{g}}_{1-i}\left(\frac{\omega}{2}\right)
\hat{x}\left(\frac{\omega}{2}\right)+\hat{\tilde{g}}_{1-i}\left(\frac{\omega}{2}-\pi\right)
\hat{x}\left(\frac{\omega}{2}-\pi\right)
\right]\notag\\
&\quad=(-1)^i\hat{w}^x_{1-i}(\omega)
\text{.}
\end{align}
\end{proof}
\begin{figure}
\centering
\includegraphics[scale=.25]{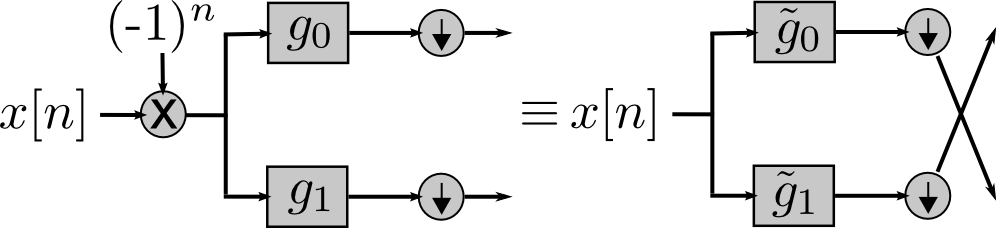}
\caption{Illustration of Theorem \ref{thm:reverse-order}, with the
  left side showing a modulated signal $x_m[n]$ and the right side
  showing reverse-ordering of the complementary filterbank. 
The low- and high-frequency filterbank subbands for the modulated signal
behave like the high- and low-frequency complementary filterbank
subbands for the original signal, respectively.}
\label{fig:filterbank01}
\end{figure}

Applying Property \ref{prpt:Vetterli} of perfect reconstruction filterbanks to \eqref{eqn:g_h} immediately yields the following important corollary.
\begin{corollary}[Modulation induced by ROSS]\label{cor:reverse-order} 
Suppose the set $\{g_i,h_i\}_{i\in\mathbb{Z}_2}$ is a perfect
  reconstruction filterbank. Then 
  \begin{align*}
    0=&\hat{\tilde{g}}_1(\omega)\hat{h}_0(\omega)-\hat{\tilde{g}}_0(\omega)\hat{h}_1(\omega),\\
    2=&\hat{\tilde{g}}_1(\omega)\hat{h}_0(\omega-\pi)-\hat{\tilde{g}}_0(\omega)\hat{h}_1(\omega-\pi)\text{.}
  \end{align*}
\end{corollary}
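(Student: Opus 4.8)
The plan is to obtain both identities by substituting the subband role-reversal relation derived in the proof of Theorem \ref{thm:reverse-order} into the alias-cancellation conditions of Property \ref{prpt:Vetterli}, after a single shift of the frequency variable. The key ingredient is the identity recorded in \eqref{eqn:g_h}, namely $\hat{g}_i(\omega) = (-1)^i \hat{\tilde{g}}_{1-i}(\omega-\pi)$, which rewrites each analysis filter as a frequency-shifted complementary filter carrying a swapped subband index $i \mapsto 1-i$ together with a sign $(-1)^i$. This is the only structural fact beyond Property \ref{prpt:Vetterli} that the argument requires.

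First I would evaluate the two Vetterli relations \eqref{eqn:vetterli1} and \eqref{eqn:vetterli2} at the shifted argument $\omega - \pi$ in place of $\omega$. This is the one non-mechanical choice in the proof: the shift is precisely what makes the complementary filters land at $\omega$, since \eqref{eqn:g_h} together with the $2\pi$-periodicity of the discrete-time Fourier transform gives $\hat{g}_i(\omega-\pi) = (-1)^i \hat{\tilde{g}}_{1-i}(\omega-2\pi) = (-1)^i \hat{\tilde{g}}_{1-i}(\omega)$. Substituting this into \eqref{eqn:vetterli1} evaluated at $\omega-\pi$ replaces $\hat{g}_0$ by $\hat{\tilde{g}}_1(\omega)$ and $\hat{g}_1$ by $-\hat{\tilde{g}}_0(\omega)$; the sign on the $i=1$ term turns Vetterli's additive pairing into the difference $\hat{\tilde{g}}_1(\omega)\hat{h}_0(\omega-\pi) - \hat{\tilde{g}}_0(\omega)\hat{h}_1(\omega-\pi)$, which equals $2$ and is exactly the second claimed identity. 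Carrying out the identical shift and substitution on \eqref{eqn:vetterli2} produces the first identity, equal to $0$, with the synthesis-filter arguments reduced from $\omega - 2\pi$ back to $\omega$ by periodicity.

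The main obstacle is purely a matter of careful bookkeeping rather than analysis: one must track the sign $(-1)^i$ and the index swap $i \mapsto 1-i$ consistently so that the two cross terms assemble into the antisymmetric combination $\hat{\tilde{g}}_1\hat{h}_0 - \hat{\tilde{g}}_0\hat{h}_1$ rather than a symmetric sum, and confirm that it is precisely the sign attached to the $i=1$ term that converts Vetterli's additive alias-cancellation relations into the difference form stated in the corollary. No analytic input beyond \eqref{eqn:g_h}, Property \ref{prpt:Vetterli}, and $2\pi$-periodicity is needed.
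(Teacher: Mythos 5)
Your proof is correct and takes essentially the same route as the paper, which establishes this corollary in a single line by applying Property \ref{prpt:Vetterli} to the identity \eqref{eqn:g_h}---precisely your substitution of $\hat{g}_i(\omega-\pi)=(-1)^i\hat{\tilde{g}}_{1-i}(\omega)$ into the alias-cancellation conditions evaluated at the shifted argument $\omega-\pi$. Your explicit bookkeeping of the $2\pi$-periodicity, the index swap $i\mapsto 1-i$, and the sign $(-1)^i$ that converts the additive Vetterli pairing into the stated difference form simply fills in the details the paper leaves implicit.
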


\begin{figure}
\centering
\includegraphics[scale=.25]{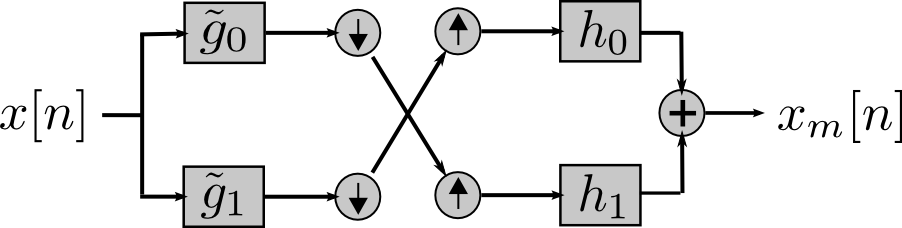}
\caption{Illustration of Corollary \ref{cor:reverse-order}: exchange
  of low- and high-frequency filterbank subbands results in modulation
  (compare to the standard complementary filterbank structure of
  Figure \ref{fig:filterbank00}(b)).}
\label{fig:filterbank02}
\end{figure}
\begin{remark}[Filterbank Interpretation of ROSS Modulation]
An intuitive interpretation of Corollary \ref{cor:reverse-order} is 
that exchanging the low- and high-frequency
filterbank subbands results in modulation.
To see this, consider
reconstruction of the complementary filterbank coefficients with
\emph{reverse-order} subbands, as illustrated in Figure \ref{fig:filterbank02}:
\begin{align*}
\hat{x}_r(\omega)&=\hat{h}_0(\omega) \hat{w}^x_1(2\omega)-\hat{h}_1(\omega)\hat{w}^x_0(2\omega)\\
&=\frac{1}{2}\hat{h}_0(\omega)\Big[\hat{\tilde{g}}_1(\omega)\hat{x}(\omega)+\hat{\tilde{g}}_1(\omega-\pi)\hat{x}(\omega-\pi)\Big]\\
&\quad -\frac{1}{2}\hat{h}_1(\omega)\Big[\hat{\tilde{g}}_0(\omega)\hat{x}(\omega)+\hat{\tilde{g}}_0(\omega-\pi)\hat{x}(\omega-\pi)\Big]\\
&=\frac{1}{2}\hat{x}(\omega)\Big[\hat{h}_0(\omega)\hat{\tilde{g}}_1(\omega)-\hat{h}_1(\omega)\hat{\tilde{g}}_0(\omega)\Big]\\
&\quad +\frac{1}{2}\hat{x}(\omega-\pi)\Big[\hat{h}_0(\omega)\hat{\tilde{g}}_1(\omega-\pi)-\hat{h}_1(\omega)\hat{\tilde{g}}_0(\omega-\pi)\Big]\\
&=\hat{x}(\omega-\pi)\text{.}
\end{align*}
\end{remark}

We also obtain a filterbank interpretation of the \emph{aliasing induced by subsampling}, in analogy to the Fourier decomposition of \eqref{eqn:FFT_aliasing}. As shown in Figure \ref{fig:filterbank03}, filterbank coefficients corresponding to the subsampled signal
$x_s[n]$ are arithmetic averages of low- and high-frequency coefficients
corresponding to $x[n]$,
in analogy to the symmetry about $\pi/2$ visible in the Figure
\ref{fig:sampling}(f). 
\begin{figure}
\centering
\includegraphics[scale=.25]{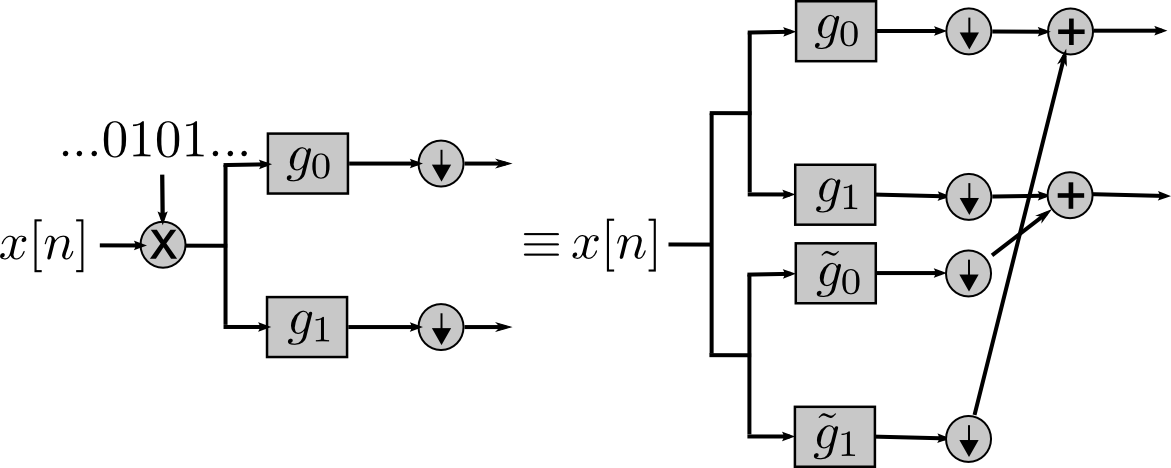}
\caption{Illustration of Corollary \ref{cor:subsampled}, with the left
  side showing $x[n]$ subject to subsampling and the right side
  showing the corresponding aliasing structure. Filterbank
  coefficients corresponding to the subsampled signal are arithmetic
  averages of \emph{complementary} low- and high-frequency coefficients.}
\label{fig:filterbank03}
\end{figure}

\begin{corollary}[Localized Aliasing]\label{cor:subsampled}
If the set $\{g_i,h_i\}_{i\in\mathbb{Z}_2}$ is a perfect
reconstruction filterbank, then by linearity and Theorem \ref{thm:reverse-order},
\begin{align*}
v^{x_s}_i[n]=\frac{1}{2}\Big(v^x_i[n]+(-1)^iw^x_{1-i}[n]\Big)\text{.}
\end{align*} 
Furthermore, this modulation implies the filterbank subband symmetry 
$v^{x_s}_i[n]=(-1)^iw^{x_s}_{1-i}[n]$.
\end{corollary}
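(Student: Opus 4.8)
The plan is to exploit two ingredients that are already in hand. The first is the linearity of the coefficient map $x\mapsto v^x_i$ (and likewise $x\mapsto w^x_i$), which is immediate from Definition~\ref{def:v}: both the convolution $\star_m$ and the restriction to the downsampled index set $\{2n\}$ are linear in $x$. The second is the reverse-ordering relation established in Theorem~\ref{thm:reverse-order}. Together these dispatch the first identity almost verbatim, and a short additional observation handles the symmetry claim.

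For the first identity I would start from the decomposition $x_s=\tfrac{1}{2}(x+x_m)$ recorded in Section~\ref{sec:time-frequency}, apply linearity of the coefficient map to obtain $v^{x_s}_i[n]=\tfrac{1}{2}\bigl(v^x_i[n]+v^{x_m}_i[n]\bigr)$, and then substitute Theorem~\ref{thm:reverse-order}, namely $v^{x_m}_i[n]=(-1)^i w^x_{1-i}[n]$. This produces the stated expression $v^{x_s}_i[n]=\tfrac{1}{2}\bigl(v^x_i[n]+(-1)^i w^x_{1-i}[n]\bigr)$ with no further work.

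For the second identity, the tempting but more laborious route would be to expand $w^{x_s}_{1-i}$ by linearity as $\tfrac{1}{2}\bigl(w^x_{1-i}+w^{x_m}_{1-i}\bigr)$ and then attempt to establish a reverse-ordering relation for the complementary coefficients themselves, which would force one to analyze the complement of the complementary filterbank. I would avoid this entirely. The crucial observation is that the subsampled sequence is a fixed point of the modulation operator: since $x_s[n]$ vanishes at every odd $n$ while $(-1)^n=1$ at every even $n$, we have $(x_s)_m[n]=(-1)^n x_s[n]=x_s[n]$ for all $n$, i.e.\ $(x_s)_m=x_s$. Re-invoking Theorem~\ref{thm:reverse-order} with $x_s$ now playing the role of $x$ gives $v^{(x_s)_m}_i[n]=(-1)^i w^{x_s}_{1-i}[n]$, and substituting $(x_s)_m=x_s$ yields the claimed symmetry $v^{x_s}_i[n]=(-1)^i w^{x_s}_{1-i}[n]$.

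I do not expect any genuine computational obstacle here; the difficulty is purely conceptual. The whole argument turns on recognizing the modulation-invariance of $x_s$, after which the symmetry follows from a single clean reapplication of the ROSS theorem rather than from any direct manipulation of the complementary coefficients. If I were to identify a pitfall, it would simply be the temptation to prove the symmetry by brute force through the complementary filterbank, which the fixed-point observation renders unnecessary.
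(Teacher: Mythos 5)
Your proposal is correct and follows essentially the route the paper intends: the paper justifies the first identity only by the phrase ``by linearity and Theorem~\ref{thm:reverse-order},'' which is precisely your decomposition $x_s=\tfrac{1}{2}\left(x+x_m\right)$ followed by substitution of the ROSS relation, and the second identity's terse ``this modulation implies'' is exactly your fixed-point observation $(x_s)_m=x_s$ fed back into Theorem~\ref{thm:reverse-order} with $x_s$ in place of $x$. You were also right to reject the laborious alternative of taking the complement of the complementary filterbank; the paper nowhere does this, and your argument closes the corollary with no gaps.
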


As can be seen in Figure \ref{fig:filterbank03}, \emph{localized aliasing} occurs when $v^x_i[n]$ and $w^{x}_{1-i}[n]$ are both
simultaneously nonzero and hence indistinguishable in
$v^{x_s}_i[n]$. Unlike the global Fourier aliasing illustrated in Figure \ref{fig:sampling}, however, this aliasing is confined to a temporally localized region that depends on the local regularity of $x$.

\subsection{Extension to Multi-Level Setting}
\label{sec:multi-level}

Multi-level filterbank analysis corresponds to a recursive application
of convolution and downsampling operators to successive sets of filterbank coefficients.
To index the corresponding subbands, we adopt \emph{binary} vector notation for indices as follows. Let $\bm{i}=(i_{I\text{-}1},\dots,i_1,i_0)^T\in{\mathbb{Z}_2}^I$ and
$\bm{i}'=(i_{I\text{-}1},\dots,i_1,1-i_0)^T$, and  
recalling $v^x_i[n]$ and $w^x_i[n]$ from Definitions \ref{def:v} and
\ref{def:w}, define the corresponding $I$-level recursions:
\begin{align}\label{eqn:multi-level_v}
\begin{split}
v^x_{\bm{i}}[n]:=&\Big(g_{i_{I\text{-}1}}\star_{m_{I\text{-}1}}\Big(\dots\Big(g_{i_1}[m_1]\star_{m_1}\Big(g_{i_0}[m_0]\\
&\star_{m_0} x[m_0]\Big)[2m_1]\Big)[2m_2]\dots\Big)[2m_{I\text{-}1}]\Big)[2n]
\end{split}\\
\label{eqn:multi-level_w}
\begin{split}
w^x_{\bm{i}}[n]:=&\Big(g_{i_{I\text{-}1}}\star_{m_{I\text{-}1}}\Big(\dots\Big(g_{i_1}[m_1]\star_{m_1}\Big(\tilde{g}_{i_0}[m_0]\\
&\star_{m_0} x[m_0]\Big)[2m_1]\Big)[2m_2]\dots\Big)[2m_{I\text{-}1}]\Big)[2n]\text{.}
\end{split}
\end{align}
Here $i_k\in\mathbb{Z}_2$ indexes the
analysis filters used in the $k$th-level decomposition (i.e.~$g_0$ or
$g_1$), and $\bm{i}'$ corresponds to a high (low) frequency subband when
$\bm{i}$ is a low (high) frequency subband.
Note here that the \emph{complementary} filters
$\tilde{g}_0$ and $\tilde{g}_1$ are used \emph{only} in the
$0$th-level decomposition in \eqref{eqn:multi-level_w}.
The corresponding perfect reconstruction extension of
\eqref{eqn:inverse_filterbank} to the case of an $I$-level filterbank is
\begin{align}\label{eqn:inverse_filterbank_multi-level}
\hat{x}_r(\omega):=\sum_{\bm{i}}\hat{v}^x_{\bm{i}}(2^I\omega)\prod_{k=0}^{I-1}\hat{h}_{i_k}(2^k\omega)\text{.}
\end{align}
Then, in parallel to our earlier development, 
the results of Theorem \ref{thm:reverse-order} and Corollary \ref{cor:subsampled} extend to the multi-level setting as follows.

\begin{theorem}[Multi-Level ROSS]\label{thm:multi-level}
Suppose the set $\{g_i, h_i\}_{i\in\mathbb{Z}_2}$ is a perfect
reconstruction filterbank, and let $\bm{i}\in{\mathbb{Z}_2}^I$. Then,
\begin{align*}
v^{x_m}_{\bm{i}}=&(-1)^{i_0}w^x_{\bm{i}'}[n];\\
v^{x_s}_{\bm{i}}=&(-1)^{i_0}w^{x_s}_{\bm{i}'}[n]=\frac{1}{2}\Big(v^x_{\bm{i}}[n]+(-1)^{i_0}w^x_{\bm{i}'}[n]\Big)\text{.}
\end{align*}
\end{theorem}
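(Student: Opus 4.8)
The plan is to reduce the $I$-level statement to the single-level result of Theorem~\ref{thm:reverse-order}, exploiting the fact that the complementary filters $\tilde{g}_{i_0}$ enter the recursion~\eqref{eqn:multi-level_w} \emph{only} at the $0$th level. Concretely, I would first isolate the innermost decomposition by recognizing the one-level coefficients $v^x_{i_0}[m_1]=(g_{i_0}[m_0]\star_{m_0}x[m_0])[2m_1]$ and $w^x_{i_0}[m_1]=(\tilde{g}_{i_0}[m_0]\star_{m_0}x[m_0])[2m_1]$ as the inputs to the outer $(I-1)$-level decomposition. Writing $\mathcal{T}:=\mathcal{T}_{(i_{I-1},\dots,i_1)}$ for the $(I-1)$-level analysis built from the ordinary filters $g_{i_1},\dots,g_{i_{I-1}}$ and downsampling, the recursions~\eqref{eqn:multi-level_v} and~\eqref{eqn:multi-level_w} become $v^x_{\bm{i}}=\mathcal{T}[v^x_{i_0}]$ and $w^x_{\bm{i}}=\mathcal{T}[w^x_{i_0}]$. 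The essential point is that $\mathcal{T}$ is \emph{identical} for both $v$ and $w$: only the level-$0$ filter differs.

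With this factorization in hand, the first identity follows quickly. Applying the single-level Theorem~\ref{thm:reverse-order} at the $0$th level gives $v^{x_m}_{i_0}=(-1)^{i_0}w^x_{1-i_0}$. Since $\mathcal{T}$ is a composition of convolution and downsampling operators it is linear, so the scalar $(-1)^{i_0}$ factors through:
\[
v^{x_m}_{\bm{i}}[n]=\mathcal{T}[v^{x_m}_{i_0}][n]=(-1)^{i_0}\,\mathcal{T}[w^x_{1-i_0}][n]\text{.}
\]
By the factorization above, $\mathcal{T}[w^x_{1-i_0}]$ is precisely the $I$-level complementary coefficient with innermost index $1-i_0$ and outer indices $i_1,\dots,i_{I-1}$---namely $w^x_{\bm{i}'}$ with $\bm{i}'=(i_{I-1},\dots,i_1,1-i_0)^T$---so $v^{x_m}_{\bm{i}}[n]=(-1)^{i_0}w^x_{\bm{i}'}[n]$.

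For the subsampled identities I would invoke linearity of the full $I$-level analysis together with $x_s=\tfrac{1}{2}(x+x_m)$, giving $v^{x_s}_{\bm{i}}=\tfrac{1}{2}\big(v^x_{\bm{i}}+v^{x_m}_{\bm{i}}\big)=\tfrac{1}{2}\big(v^x_{\bm{i}}[n]+(-1)^{i_0}w^x_{\bm{i}'}[n]\big)$, which is the right-hand equality. For the middle equality I would observe that subsampling is invariant under modulation---because $x_s[n]=0$ at odd $n$ while $(-1)^n=1$ at even $n$, we have $(x_s)_m=x_s$---and then apply the first identity to the sequence $x_s$ in place of $x$: $v^{x_s}_{\bm{i}}=v^{(x_s)_m}_{\bm{i}}=(-1)^{i_0}w^{x_s}_{\bm{i}'}$.

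The main obstacle is purely organizational: correctly peeling the innermost level off the nested recursions~\eqref{eqn:multi-level_v}--\eqref{eqn:multi-level_w} and confirming that the outer operator $\mathcal{T}_{(i_{I-1},\dots,i_1)}$ is genuinely shared between the $v$ and $w$ constructions, so that reindexing the single level-$0$ filter is the only change induced by modulation. Once this bookkeeping is settled, Theorem~\ref{thm:reverse-order}, linearity, and the identity $(x_s)_m=x_s$ complete the argument without further computation.
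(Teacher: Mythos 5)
Your proof is correct and matches the paper's intended argument: the paper states Theorem~\ref{thm:multi-level} without an explicit proof, asserting only that Theorem~\ref{thm:reverse-order} and Corollary~\ref{cor:subsampled} extend ``in parallel'' to the multi-level setting, and your reduction---peeling off the level-$0$ filter, observing that the outer $(I-1)$-level operator $\mathcal{T}$ is shared between the $v$ and $w$ recursions and is linear, then invoking $x_s=\tfrac{1}{2}(x+x_m)$ and $(x_s)_m=x_s$ for the two subsampling identities---is precisely the bookkeeping that assertion leaves implicit. No gaps.
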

Using the same example shown in Figure \ref{fig:sampling}, Figure \ref{fig:sampling_ROSS} illustrates this localized aliasing in the multi-level filterbank setting.
Note that although the subsampled example signal is subject to aliasing in a global sense (Figure~\ref{fig:sampling}(f)), the corresponding $v^x_{\bm{i}}[n]$ may be recovered from $v^{x_s}_{\bm{i}}[n]$ whenever $w^x_{\bm{i}'}[n]=0$. Theorem \ref{thm:multi-level} simplifies when self-complimentarity is taken into account, illustrated also in Figure \ref{fig:filterbank04}.

\begin{corollary}[Multi-Level Self-Complementary ROSS]\label{cor:multi-level_haar}
If the set $\{g_i,h_i\}_{i\in\mathbb{Z}_2}$ is a perfect
reconstruction filterbank that is also self-complementary, then
\begin{align*}
v^x_{\bm{i}}[n]=&(-1)^{1-i_0}w^x_{\bm{i}}[n]\text{,}\\
v^{x_m}_{\bm{i}}[n]=&v^x_{\bm{i}'}[n]\text{;}\\
v^{x_s}_{\bm{i}}[n]=&\frac{1}{2}\Big(v^x_{\bm{i}}[n]+v^{x_m}_{\bm{i}}[n]\Big)=\frac{1}{2}\Big(v^x_{\bm{i}}[n]+v^x_{\bm{i}'}[n]\Big)\text{.}
\end{align*}
\end{corollary}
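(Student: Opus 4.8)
The plan is to establish the three identities in sequence, treating the first as the workhorse from which the other two follow by direct substitution into Theorem~\ref{thm:multi-level}. The essential content is the first identity $v^x_{\bm{i}}[n]=(-1)^{1-i_0}w^x_{\bm{i}}[n]$, which lifts the single-level self-complementary relation~\eqref{eqn:self-complementary} to the multi-level setting; once this is in hand, the remaining two claims reduce to rewriting the $w$-terms appearing in Theorem~\ref{thm:multi-level} as $v$-terms.

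For the first identity, the key observation is that the multi-level recursions~\eqref{eqn:multi-level_v} and~\eqref{eqn:multi-level_w} differ \emph{only} in their innermost ($0$th-level) operation, which convolves with $g_{i_0}$ in the case of $v^x_{\bm{i}}$ and with $\tilde{g}_{i_0}$ in the case of $w^x_{\bm{i}}$. Writing the $0$th-level coefficient sequences as $v^x_{i_0}[n]=(g_{i_0}[m]\star_m x[m])[2n]$ and $w^x_{i_0}[n]=(\tilde{g}_{i_0}[m]\star_m x[m])[2n]$, both $v^x_{\bm{i}}$ and $w^x_{\bm{i}}$ are obtained by applying one and the same $(I\text{-}1)$-level analysis operator $T$---the composition of convolutions with $g_{i_1},\dots,g_{i_{I\text{-}1}}$ interleaved with downsampling---to $v^x_{i_0}$ and $w^x_{i_0}$, respectively, so that $v^x_{\bm{i}}=T[v^x_{i_0}]$ and $w^x_{\bm{i}}=T[w^x_{i_0}]$. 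Since $T$ is linear and self-complementarity~\eqref{eqn:self-complementary} gives $v^x_{i_0}=(-1)^{1-i_0}w^x_{i_0}$ at the $0$th level, the scalar $(-1)^{1-i_0}$ passes through $T$ to yield $v^x_{\bm{i}}[n]=(-1)^{1-i_0}w^x_{\bm{i}}[n]$.

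With the first identity established, I would apply it at the reversed index $\bm{i}'$, whose $0$th entry is $1-i_0$, to obtain $v^x_{\bm{i}'}[n]=(-1)^{1-(1-i_0)}w^x_{\bm{i}'}[n]=(-1)^{i_0}w^x_{\bm{i}'}[n]$; equivalently $(-1)^{i_0}w^x_{\bm{i}'}[n]=v^x_{\bm{i}'}[n]$. Substituting this into the first line of Theorem~\ref{thm:multi-level}, $v^{x_m}_{\bm{i}}=(-1)^{i_0}w^x_{\bm{i}'}[n]$, gives the second claim $v^{x_m}_{\bm{i}}[n]=v^x_{\bm{i}'}[n]$ at once. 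For the third claim, the same substitution converts the $w$-term in the second line of Theorem~\ref{thm:multi-level}, so that $v^{x_s}_{\bm{i}}[n]=\tfrac{1}{2}\big(v^x_{\bm{i}}[n]+(-1)^{i_0}w^x_{\bm{i}'}[n]\big)=\tfrac{1}{2}\big(v^x_{\bm{i}}[n]+v^x_{\bm{i}'}[n]\big)$, and the second claim identifies $v^x_{\bm{i}'}[n]$ with $v^{x_m}_{\bm{i}}[n]$ to produce the final equality.

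I expect the main obstacle to lie in making the ``same linear operator'' argument of the second paragraph rigorous. One must verify that the complementary filter $\tilde{g}_{i_0}$ enters \emph{only} at level $0$---as the remark following~\eqref{eqn:multi-level_w} emphasizes---so that levels $1,\dots,I\text{-}1$ of the two recursions in~\eqref{eqn:multi-level_v} and~\eqref{eqn:multi-level_w} are term-by-term identical, the index $\bm{i}$ being common to both. Once this structural fact is confirmed, the factored scalar passes through $T$ immediately by linearity of convolution and downsampling; a short induction on the number of levels $I$, with~\eqref{eqn:self-complementary} supplying the single-level base case, offers an equally routine alternative should one prefer to avoid the operator-level bookkeeping.
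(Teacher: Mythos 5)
Your proof is correct and follows essentially the route the paper intends: the paper states this corollary without an explicit proof, treating it as an immediate consequence of Theorem~\ref{thm:multi-level} once the single-level self-complementarity relation~\eqref{eqn:self-complementary} is lifted to the multi-level setting, and your ``common tail operator'' argument---valid precisely because $\tilde{g}_{i_0}$ enters the recursion~\eqref{eqn:multi-level_w} only at level $0$, so levels $1,\dots,I\text{-}1$ of~\eqref{eqn:multi-level_v} and~\eqref{eqn:multi-level_w} coincide and the scalar $(-1)^{1-i_0}$ passes through by linearity---is the intended justification of the first identity. Your subsequent substitutions at the reversed index $\bm{i}'$ (whose $0$th entry is $1-i_0$) into the two lines of Theorem~\ref{thm:multi-level} yield the second and third identities exactly as the paper's development suggests.
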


\begin{figure}
\centering
\subfigure[$v^x_{\bm{i}}$]{\includegraphics[width=.075\textwidth]{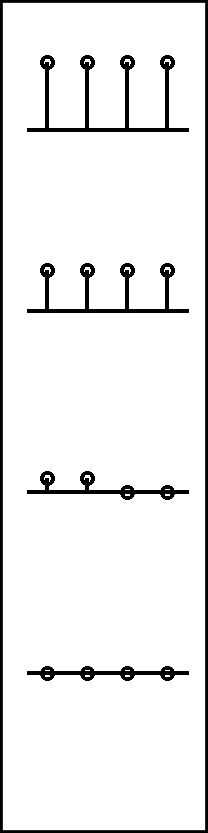}}
\subfigure[$v^{x_m}_{\bm{i}}$]{\includegraphics[width=.075\textwidth]{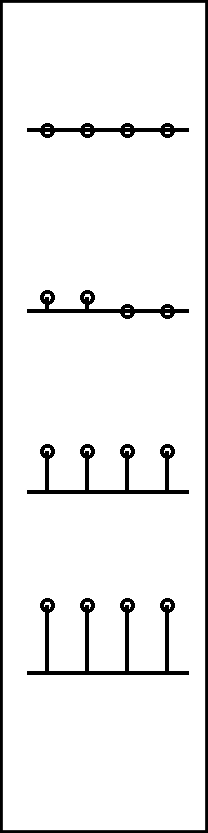}}
\subfigure[$v^{x_s}_{\bm{i}}$]{\includegraphics[width=.075\textwidth]{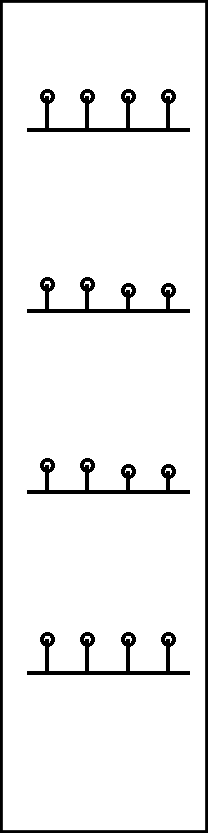}}
\subfigure[$w^{x}_{\bm{i}}$]{\includegraphics[width=.075\textwidth]{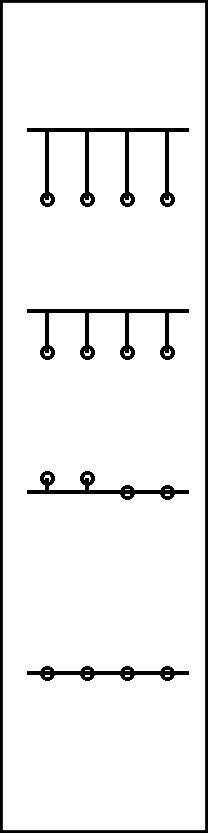}}
\caption{Pictorial illustration of \emph{localized aliasing} in the 2-level filterbank domain, as indicated by Theorem \ref{thm:multi-level}. Parts (a-c) show filterbank coefficients sequences corresponding to Fig.~\ref{fig:sampling}(a-c), respectively, with (d) the complementary sequence corresponding to Fig.~\ref{fig:sampling}(a). From top to bottom, the ordering of the four subbands represented in each subfigure is $\bm{i}=(0,0),(1,0),(1,1),(0,1)$. ``Rewiring'' is evident in comparing (b) with (d), and $v^x_{\bm{i}}[n]$ is exactly recoverable from $v^{x_s}_{\bm{i}}[n]$ whenever $w^x_{\bm{i}'}[n]=0$.}
\label{fig:sampling_ROSS}
\end{figure}

\begin{figure*}
\centering
\includegraphics[scale=.25]{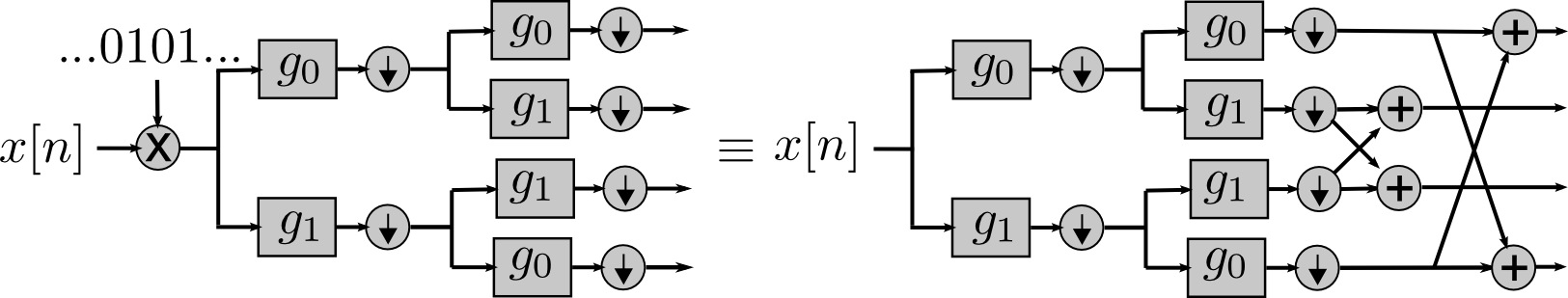}
\caption{Illustration of Corollary \ref{cor:multi-level_haar}. Filterbank
  coefficients corresponding to the subsampled signal are arithmetic
  averages of reverse-order coefficients. 
\emph{Localized aliasing} thus occurs  
when $v^x_{\bm{i}}[n]$ and $w^{x}_{\bm{i}'}[n]$ are both supported
simultaneously and hence indistinguishable in
$v^{x_s}_i[n]$.}
\label{fig:filterbank04}
\end{figure*}

\begin{remark}[Extension to Discrete Wavelet Transform]  The above results can easily be adapted to discrete
wavelet transforms, which employ the same fundamental building blocks as perfect
reconstruction filterbanks \cite{ref:Mallat_1999,ref:Strang_1996}.
As an example, the filterbank rewiring associated with the wavelet transform of a subsampled signal $x_s[n]$ is illustrated in Figure \ref{fig:wavelet}; we leave the details as an easy exercise for the reader.
\end{remark}

\begin{figure*}
\centering
\includegraphics[scale=.25]{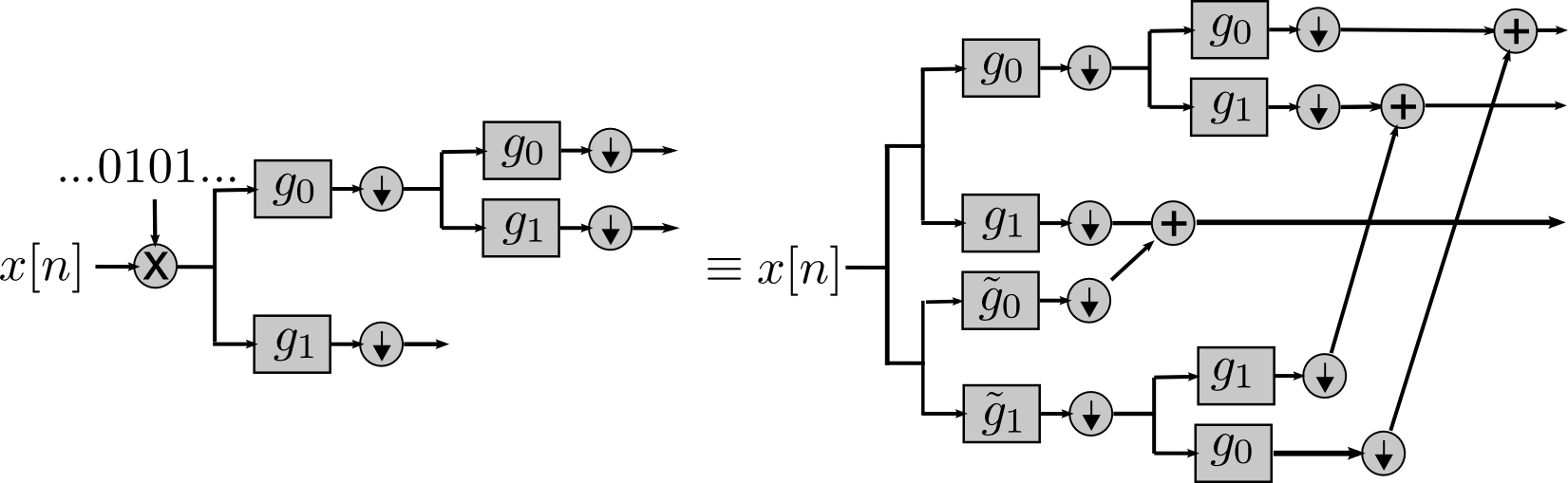}
\caption{An example of ROSS-based \emph{wavelet} analysis, with the left
  side showing $x[n]$ subject to subsampling, and the right side
  showing the corresponding aliasing structure.}
\label{fig:wavelet}
\end{figure*}

\section{Subband Convolution Structure and Localized Modulation}
\label{sec:SCS}

In the previous section, we introduced and studied the reverse-ordering of subband structure induced
by subsampling, for the case of general perfect
reconstruction filterbanks. Furthermore, we saw that this structure simplified considerably when self-complimentarity was taken into account. In this
section, we show that the symmetry of the Haar filterbank transform
also affords a characterization of Fourier group duality and
convolution. As a special case, we recover the multi-level
ROSS structure of Corollary \ref{cor:multi-level_haar}, thereby linking the ROSS results of Section \ref{sec:ROSS} and the subband convolution structure (SCS) we introduce below. 

\subsection{Subband Convolution Structure}
Recall the multi-level filterbank decomposition of \eqref{eqn:multi-level_v},
in which $i_k\in{\mathbb{Z}_2}$ indexes pairs of analysis filters used for the
$k$th level decomposition. Below, we prove the duality of time-domain 
multiplication and \emph{subband convolution} of filterbank coefficients in this context. 

\begin{theorem}[Subband Convolution]\label{thm:modulation_HFT}
Let $v^x_{\bm{i}},v^y_{\bm{i}}$ 
be $I$-level Haar filterbank coefficient sequences corresponding to
$x$ and $y$, respectively, with $v_{\bm{i}}^{xy}[n]$ that of the element-wise product $xy$. Then, letting $\circledast$ denote cyclic
convolution, we have the relation
\begin{align*}
\Big(v^x_{\bm{j}}\circledast_{\bm{j}}v^y_{\bm{j}}\Big)_{\bm{i}}[n] :=\frac{1}{2^I}\sum_{\bm{j}}v^x_{\bm{i}+\bm{j}}[n]v^y_{\bm{j}}[n]=v^{xy}_{\bm{i}}[n]\text{.}
\end{align*}
\end{theorem}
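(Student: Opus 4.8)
The plan is to induct on the number of levels $I$, using the recursive definition \eqref{eqn:multi-level_v} to peel off one level at a time and reduce to the single-level identity already exhibited in the second motivating example (Convolution and Pointwise Multiplication). Conceptually, the statement is the convolution theorem for the finite abelian group ${\mathbb{Z}_2}^I$: for fixed block index $n$, the map $\bm{i}\mapsto v^x_{\bm{i}}[n]$ is the Walsh--Hadamard (equivalently, ${\mathbb{Z}_2}^I$-Fourier) transform of the length-$2^I$ sample block of $x$ associated with $n$, and pointwise multiplication of blocks dualizes to convolution of transforms. The normalization $2^{-I}=|{\mathbb{Z}_2}^I|^{-1}$ and the symmetric index $\bm{i}+\bm{j}$ both reflect that every element of ${\mathbb{Z}_2}^I$ is its own inverse, so that $-\bm{j}=\bm{j}$.

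For the base case $I=1$, I would first read off from \eqref{eqn:HFT} and Definition \ref{def:v} that $v^x_0[n]=x[2n]+x[2n+1]$ and $v^x_1[n]=x[2n]-x[2n+1]$, so that $(v^x_0[n],v^x_1[n])$ is exactly the Hadamard transform of $(x[2n],x[2n+1])$. Setting $a=x[2n]$, $b=x[2n+1]$, $c=y[2n]$, $d=y[2n+1]$ identifies $(q,r)=(v^x_0[n],v^x_1[n])$ and $(s,t)=(v^y_0[n],v^y_1[n])$, while $(xy)[2n]=ac$ and $(xy)[2n+1]=bd$. The second motivating example then gives $v^{xy}_0[n]=\tfrac12(qs+rt)$ and $v^{xy}_1[n]=\tfrac12(qt+rs)$, which is precisely $\tfrac12\sum_{j\in\mathbb{Z}_2}v^x_{i+j}[n]\,v^y_j[n]$ for $i\in\{0,1\}$, with $i+j$ taken modulo $2$.

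For the inductive step I would write $\bm{i}=(\bm{k},i_0)$ with $i_0\in\mathbb{Z}_2$ the finest-level index and $\bm{k}=(i_{I\text{-}1},\dots,i_1)\in{\mathbb{Z}_2}^{I-1}$ the remaining indices, and note from \eqref{eqn:multi-level_v} that $v^x_{\bm{i}}[n]$ is the $\bm{k}$-indexed $(I-1)$-level coefficient of the first-level sequence $m\mapsto v^x_{i_0}[m]$. Applying the base case at the first level expresses $v^{xy}_{i_0}[m]=\tfrac12\sum_{j_0}v^x_{i_0+j_0}[m]\,v^y_{j_0}[m]$ as a \emph{sum of pointwise products} of sequences in $m$. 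The main obstacle is that the induction hypothesis applies to a single product, not to a sum of them; the key maneuver is therefore to push the $(I-1)$-level transform through this sum by linearity, and only then invoke the hypothesis on each product $v^x_{i_0+j_0}\cdot v^y_{j_0}$ separately.

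Each such term becomes $2^{-(I-1)}\sum_{\bm{l}}v^x_{(\bm{k}+\bm{l},\,i_0+j_0)}[n]\,v^y_{(\bm{l},\,j_0)}[n]$ once the $(I-1)$-level coefficients of first-level sequences are re-identified as $I$-level coefficients through the identification of $v^x_{\bm{i}}[n]$ with the $\bm{k}$-indexed $(I-1)$-level coefficient of $v^x_{i_0}$. Combining the leading $\tfrac12$ with $2^{-(I-1)}$ yields $2^{-I}$, and the double sum over $(j_0,\bm{l})\in\mathbb{Z}_2\times{\mathbb{Z}_2}^{I-1}$ collapses to a single sum over $\bm{j}=(\bm{l},j_0)\in{\mathbb{Z}_2}^I$. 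Since $(\bm{k}+\bm{l},\,i_0+j_0)=\bm{i}+\bm{j}$ under component-wise addition modulo $2$, this is exactly $v^{xy}_{\bm{i}}[n]=\frac{1}{2^I}\sum_{\bm{j}}v^x_{\bm{i}+\bm{j}}[n]\,v^y_{\bm{j}}[n]$, closing the induction. The only part demanding care is the relabeling under the identification ${\mathbb{Z}_2}^I\cong{\mathbb{Z}_2}^{I-1}\times\mathbb{Z}_2$ and the verification that the two normalization factors multiply to $2^{-I}$.
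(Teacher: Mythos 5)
Your proof is correct, but it takes a genuinely different route from the paper's. The paper works entirely in the Fourier domain: it writes $\widehat{(x\cdot y)}(\omega)=\hat{x}_r(\omega)\star_\omega\hat{y}_r(\omega)$ using the multi-level synthesis formula \eqref{eqn:inverse_filterbank_multi-level}, exploits the Haar-specific filter identity $\hat{h}_{i_k}(2^k\nu)\,\hat{h}_{j_k}\big(2^k(\omega-\nu)\big)=\tfrac12\big[\hat{h}_{i_k+j_k}(2^k\omega)+(-1)^{i_k}e^{-j2^k\nu}\hat{h}_{i_k+j_k}\big(2^k(\omega-2\nu)\big)\big]$, argues that the oscillatory cross terms vanish under the integral, recurses over the levels $k=0,\dots,I-1$, and then---crucially---must invoke the invertibility (one-to-one and onto) of the Haar transform to identify the resulting synthesis-form expression with the coefficients $v^{xy}_{\bm{i}}$. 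Your time-domain induction on $I$ replaces all of this with the explicit $2\times 2$ Hadamard computation at the base level plus linearity of the $(I-1)$-level analysis operator, and it delivers the coefficient identity directly, with no integrals, no cancellation-of-oscillatory-terms lemma, and no appeal to invertibility. You also correctly isolate the one nontrivial maneuver (distributing the $(I-1)$-level transform over the sum $\tfrac12\sum_{j_0}v^x_{i_0+j_0}\,v^y_{j_0}$ before applying the hypothesis to each single product), and your index bookkeeping under ${\mathbb{Z}_2}^I\cong{\mathbb{Z}_2}^{I-1}\times\mathbb{Z}_2$ and the normalization $\tfrac12\cdot 2^{-(I-1)}=2^{-I}$ check out against the paper's conventions ($i_0$ is indeed the finest-level index in \eqref{eqn:multi-level_v}). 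Both arguments ultimately rest on the same Haar-specific group structure---your base case is the $\mathbb{Z}_2$-convolution visible in the motivating example, while the paper encodes it in the filter product identity $\hat{h}_{i}\hat{h}_{j}\mapsto\hat{h}_{i+j}$---but your version is more elementary and arguably more transparent about \emph{why} the result holds blockwise in $n$, whereas the paper's Fourier route makes the aliasing-cancellation mechanism explicit and sits closer to its surrounding machinery for the ROSS results.
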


\begin{proof}
By the Fourier representation of \eqref{eqn:inverse_filterbank_multi-level}, the product $x[n]\cdot y[n]$ can be analyzed as:
\begin{align}\label{eqn:xy}
&\widehat{(x\cdot y)}(\omega)=\hat{x}_r(\omega)\star_{\omega}\hat{y}_r(\omega)\notag\\
&=\left[\sum_{\bm{i}}\hat{v}^x_{\bm{i}}(2^I\omega)\prod_{k=0}^{I-1}\hat{h}_{i_k}(2^k\omega)\right]\star_{\omega}\left[\sum_{\bm{j}}\hat{v}^x_{\bm{j}}(2^I\omega)\prod_{k=0}^{I-1}\hat{h}_{j_k}(2^k\omega)\right]\notag\\
\begin{split}
&=\sum_{\bm{i}}\sum_{\bm{j}}\int_{-\pi}^{\pi}\hat{v}^x_{\bm{i}}(2^I\nu)\hat{v}^x_{\bm{j}}\Big(2^I(\omega-\nu)\Big)\\
&\quad\quad\cdot\left[\prod_{k=0}^{I-1}\hat{h}_{i_k}(2^k\nu)\hat{h}_{j_k}\Big(2^k(\omega-\nu)\Big)\right]d\nu
\text{.}\end{split}
\end{align}
It follows from the definition of the Haar filterbank transform in \eqref{eqn:HFT} that
\begin{align*}
&\hat{h}_{i_k}(2^k\nu)\hat{h}_{j_k}\Big(2^k(\omega-\nu)\Big)\\
&=\frac{1}{4}\left[(-1)^{i_k}+e^{-j2^k\nu}\right]\left[(-1)^{j_k}+e^{-j2^k(\omega-\nu)}\right]\\
&=\frac{1}{2}\left[\hat{h}_{i_k+j_k}(2^k\omega)+(-1)^{i_k}e^{-j2^k\nu}\hat{h}_{i_k+j_k}\left(2^k(\omega-2\nu)\right)\right]\text{.}
\end{align*}
Substituting this expression into \eqref{eqn:xy} for $k=0$,
\begin{align*}
&\widehat{(x\cdot y)}(\omega)
=\frac{1}{2}\sum_{\bm{i}}\sum_{\bm{j}}\int_{-\pi}^{\pi}\hat{v}^x_{\bm{i}}(2^I\nu)\hat{v}^x_{\bm{j}}\\
&\quad\quad\cdot 
\Big(2^I(\omega-\nu)\Big)\left[\prod_{k=1}^{I-1}\hat{h}_{i_k}(2^k\nu)\hat{h}_{j_k}\Big(2^k(\omega-\nu)\Big)\right]\\
&\quad\quad\cdot
\left[\hat{h}_{i_0+j_0}(\omega)+(-1)^{i_0}e^{-j\nu}\hat{h}_{i_0+j_0}(\omega-2\nu)\right]d\nu\\
&\quad=\frac{1}{2}\sum_{\bm{i}}\sum_{\bm{j}}\hat{h}_{i_0+j_0}(\omega)
\int_{-\pi}^{\pi}\hat{v}^x_{\bm{i}}(2^I\nu)\hat{v}^x_{\bm{j}}\Big(2^I(\omega-\nu)\Big)\\
&\quad\quad\cdot\left[\prod_{k=1}^{I-1}\hat{h}_{i_k}(2^k\nu)\hat{h}_{j_k}\Big(2^k(\omega-\nu)\Big)\right]d\nu\text{,}
\end{align*}
where we have used the fact that 
$\int_{-\pi}^{\pi} e^{-j2^k\nu}\hat{f}(\nu)d\nu=0$ for all $\hat{f}(\nu)\in L(\mathbb{R}/2^{-k}\pi)$ whenever $k\geq 0$, as $e^{-j(2^k\nu-\pi)}\hat{f}(\nu-2^{-k}\pi)=-e^{-j2^k\nu}\hat{f}(\nu)$. By recursion over $k$, the above reduces to
\begin{align}\label{eqn:recursion}
\begin{split}
\widehat{(x\cdot y)}(\omega)
=&\frac{1}{2^{K}}\sum_{\bm{i}}\sum_{\bm{j}}\left[\prod_{k=0}^{K-1}\hat{h}_{i_k+j_k}(2^k\omega)\right]\\
&\cdot\int_{-\pi}^{\pi}\hat{v}^x_{\bm{i}}(2^I\nu)\hat{v}^x_{\bm{j}}\Big(2^I(\omega-\nu)\Big)\\
&\cdot\left[\prod_{k=K}^{I-1}\hat{h}_{i_k}(2^k\nu)\hat{h}_{j_k}\Big(2^k(\omega-\nu)\Big)\right]d\nu\\
=&\frac{1}{2^{K+1}}\sum_{\bm{i}}\sum_{\bm{j}}\left[\prod_{k=0}^{K}\hat{h}_{i_k+j_k}(2^k\omega)\right]\\
&\cdot\int_{-\pi}^{\pi}\hat{v}^x_{\bm{i}}(2^I\nu)\hat{v}^x_{\bm{j}}\Big(2^I(\omega-\nu)\Big)\\
&\cdot\left[\prod_{k=K+1}^{I-1}\hat{h}_{i_k}(2^k\nu)\hat{h}_{j_k}\Big(2^k(\omega-\nu)\Big)\right]d\nu\\
=\dots=&\frac{1}{2^I}\sum_{\bm{i}}\sum_{\bm{j}}\left[\prod_{k=0}^{I-1}\hat{h}_{i_k+j_k}(2^k\omega)\right]\\
&\cdot\left[\hat{v}^x_{\bm{i}}(2^I\omega)\star_{\omega}\hat{v}^y_{\bm{j}}(2^I\omega)\right]\text{.}
\end{split}
\end{align}
Note that \eqref{eqn:recursion} takes the form of a multi-level
inverse filterbank transform, as per \eqref{eqn:inverse_filterbank_multi-level}. As the Haar filterbank transform is one-to-one and onto, $v_{\bm{i}}^{xy}$ is thus uniquely defined by
\begin{align*}
\hat{v}^{xy}_{\bm{i}}(\omega)=\frac{1}{2^I}\sum_{\bm{j}}\left[\hat{v}^x_{\bm{i}+\bm{j}}(\omega)\star_{\omega}\hat{v}^y_{\bm{j}}(\omega)\right]\text{,}
\end{align*}
which agrees with the claim of the theorem.
\end{proof}

Figure \ref{fig:filterbank05} illustrates the
corresponding ``rewiring''
of filterbank subbands, in which  $v^x_{\bm{i}}[n]$ and $v^y_{\bm{i}}[n]$ are coupled together
to yield $v^{xy}_0[n]$, and $v^x_{\bm{i}'}[n]$ and $v^y_{\bm{i}}[n]$ are combined
to produce $v^{xy}_1$.
\begin{figure}
\centering
\includegraphics[scale=.25]{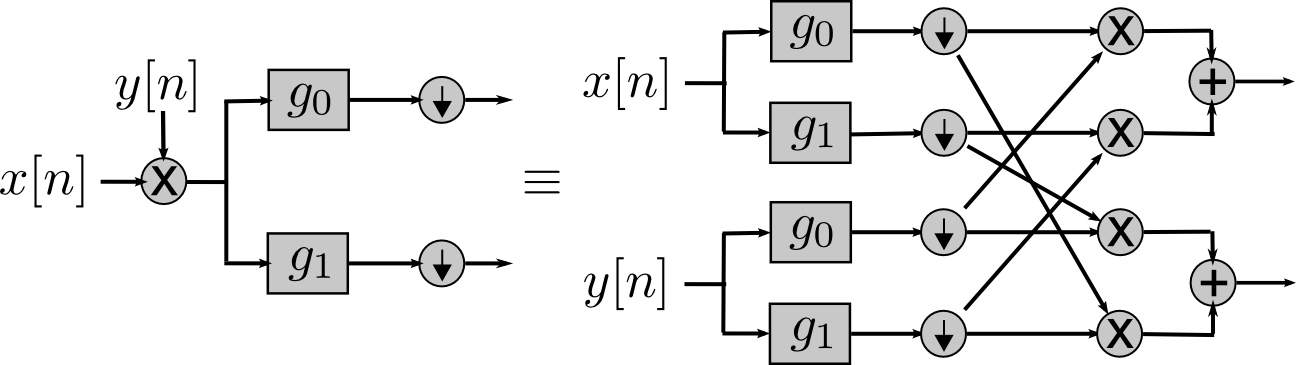}
\caption{Illustration of Theorem
  \ref{thm:modulation_HFT}, showing the correspondence between time-domain multiplication and ``logical convolution'' of filterbank subbands.}
\label{fig:filterbank05}
\end{figure}

\begin{remark}[Logical Convolution]
By restricting $x$ and $y$ be finite-dimensional, we recover the so-called logical
convolution theorem \cite{ref:Kennett_1970} as a
special case of Theorem \ref{thm:modulation_HFT}.
This is easily seen by considering an
order-$2^I$ Walsh sequence
$\vec{\phi}_{\bm{i}}\in\mathbb{R}^{2^I}$ and its Abelian structure: 
\begin{align}\label{eqn:abelian}
\text{diag}(\vec{\phi}_{\bm{i}}) \vec{\phi}_{\bm{j}}=\vec{\phi}_{\bm{i}+\bm{j}},
\quad \bm{i},\bm{j}\in\{\mathbb{Z}_2\}^I\text{.}
\end{align}
Orthogonality of the Walsh basis sets implies that any $\vec{x} \in \mathbb{R}^{2^I}$ can be expanded in terms of its Walsh-Hadamard coefficients $\Big\langle\vec{\phi}_{\bm{i}},\vec{x}\Big\rangle=\vec{\phi}_{\bm{i}}^T\vec{x}$
as $\vec{x}=2^{-I}\sum_{\bm{j}}\Big\langle\vec{\phi}_{\bm{j}},\vec{x}\Big\rangle\vec{\phi}_{\bm{j}}$,
and hence the group homomorphism of \eqref{eqn:abelian} yields the desired relation for all
$\vec{x},\vec{y}\in \mathbb{R}^{2^I}$:
\begin{align*}
\Big\langle\vec{\phi}_{\bm{i}},\text{diag}(\vec{x})\vec{y}\Big\rangle =& \frac{1}{2^{2I}}\sum_{\bm{j},\bm{j}'} \Big\langle\vec{\phi}_{\bm{j}},\vec{x}\Big\rangle 
\Big\langle \vec{\phi}_{\bm{j}'},\vec{y}\Big\rangle
\Big\langle\vec{\phi}_{\bm{i}},
\text{diag}(\vec{\phi}_{\bm{j}}) \vec{\phi}_{\bm{j}'} 
\Big\rangle\\
=&
\frac{1}{2^I}\sum_{\bm{j}}\Big\langle\vec{\phi}_{\bm{j}},\vec{x}\Big\rangle\Big\langle\vec{\phi}_{\bm{i}+\bm{j}},\vec{y}\Big\rangle\text{.}
\end{align*}
\end{remark}

\subsection{Localized Modulation and Connection to Reverse-Ordered Subband Structure}

We conclude this section by interpreting the result of Theorem \ref{thm:modulation_HFT} 
in terms of amplitude modulation and sampling. 

\begin{remark}[Multi-Level ROSS for Haar Filterbank Transform]
Suppose that we set $v^y_{(0,\dots,0,0)}[n]=v^y_{(0,\dots,0,1)}[n]=2^{I-1}$ for all $n$, and
$v^y_{\bm{i}}[n]=0$ otherwise, thus yielding a Dirac comb. Then
Theorem \ref{thm:modulation_HFT} agrees precisely with Corollary \ref{cor:multi-level_haar}:
\begin{align*}
\Big(v^x_{\bm{j}}\circledast_{\bm{j}}v^y_{\bm{j}}\Big)_{\bm{i}}[n]
=&\frac{1}{2}\Big(v^x_{\bm{i}}[n]+v^x_{\bm{i}'}[n]\Big)
\text{.}
\end{align*}
\end{remark}
\begin{remark}[Generalized Subsampling]
More generally, suppose
$y[n]\in\{0,1\}$ is a sampling mask of any kind. Then it may be seen from 
Figure \ref{fig:filterbank05} that the subsampled
signal $x[n]y[n]$ is aliased if $v^x_0[n]v^y_0[n]$ and
$v^x_1[n]v^y_1[n]$ (or, $v^x_0[n]v^y_1[n]$ and $v^x_0[n]v^y_1[n]$) are
simultaneously supported. 
\end{remark}

\begin{remark}[Localized Modulation]
Consider Figure \ref{fig:filterbank05} again and suppose at time
$n=0$, we have that $v^y_0[n]=0$ and $v^y_1[n]=1$.
Then $v^{x}_0[n]$ is ``modulated'' to $v^{xy}_1[n]$.
This is similar to Fourier amplitude modulation, in which case the energy of the
modulated signal is concentrated around a chosen carrier frequency;
however, a representation based on
Theorem \ref{thm:modulation_HFT} is amenable to temporally local
processing. For example, suppose at time $n=1$, we take $v^y_0[n]=1$ and
$v^y_1[n]=0$; then $v^{x}_0[n]$ is mapped to $v^{xy}_0[n]$ instead of
$v^{xy}_1[n]$. In other words, this filterbank interpretation of
\emph{localized modulation}---illustrated in Figure \ref{fig:SCS}---is ideal for tracking modulation when the
``carrier'' $y[n]$ is allowed to change over time. 
\end{remark}

Formally, we are concerned with characterizing a sum 
 $z[n]=\sum_k x_k[n]y_k[n]$ of modulated sequences $x_k$. When the ``envelope function''
$y_k[n]$ is chosen carefully, then it follows from
Theorem~\ref{thm:modulation_HFT} that these signals are recoverable.

\begin{proposition}[Localized Amplitude Modulation]
Suppose $y_k[n]$ is defined by a sequence of index values $\bm{j}_k\in\{\mathbb{Z}_2\}^I$
and Haar filterbank coefficients: $v^{y_k}_{\bm{i}}[n]=\delta(\bm{i},\bm{j}_k[n])$. 
Then $x_k[n]$ is recoverable from
$z[n]=\sum_k x_k[n]y_k[n]$ when the supports of $v^{x_k}_{\bm{i}+\bm{j}_k[n]}[n]$ are
mutually exclusive for all $\bm{i}$ and $k$.
\end{proposition}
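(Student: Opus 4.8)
The plan is to reduce the claim to a direct application of the subband convolution identity of Theorem \ref{thm:modulation_HFT}, exploiting the fact that each envelope $y_k$ has been chosen so that its Haar coefficients are, at every time $n$, a single unit impulse located at subband $\bm{j}_k[n]$. First I would pass to the Haar filterbank domain: since the transform is linear, the coefficients of $z=\sum_k x_ky_k$ satisfy $v^z_{\bm{i}}[n]=\sum_k v^{x_ky_k}_{\bm{i}}[n]$, so it suffices to analyze each product $x_ky_k$ separately.

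For each summand I would invoke Theorem \ref{thm:modulation_HFT} with $x=x_k$ and $y=y_k$, giving $v^{x_ky_k}_{\bm{i}}[n]=2^{-I}\sum_{\bm{j}}v^{x_k}_{\bm{i}+\bm{j}}[n]\,v^{y_k}_{\bm{j}}[n]$. Substituting the prescribed coefficients $v^{y_k}_{\bm{j}}[n]=\delta(\bm{j},\bm{j}_k[n])$ collapses the inner subband convolution to a single term, so that $v^{x_ky_k}_{\bm{i}}[n]=2^{-I}v^{x_k}_{\bm{i}+\bm{j}_k[n]}[n]$. Summing over $k$ then yields the key identity $v^z_{\bm{i}}[n]=2^{-I}\sum_k v^{x_k}_{\bm{i}+\bm{j}_k[n]}[n]$, which exhibits $z$ as an overlay of the coefficient arrays of the individual $x_k$, each rigidly permuted across subbands by the involutive shift $\bm{i}\mapsto\bm{i}+\bm{j}_k[n]$ on $\{\mathbb{Z}_2\}^I$.

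It remains to invert this overlay. Here I would use the mutual-exclusivity hypothesis: for each fixed $\bm{i}$ and $n$ the terms $v^{x_k}_{\bm{i}+\bm{j}_k[n]}[n]$ have disjoint supports across $k$, so at most one summand in the key identity is nonzero. Consequently, to recover an arbitrary coefficient $v^{x_k}_{\bm{m}}[n]$ one sets $\bm{i}=\bm{m}+\bm{j}_k[n]$ (using $\bm{j}_k[n]+\bm{j}_k[n]=\bm{0}$) and reads off $v^{x_k}_{\bm{m}}[n]=2^I v^z_{\bm{i}}[n]$, the contributions of the remaining signals to this particular coefficient of $z$ being annihilated by the disjoint-support condition. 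Having thus recovered every Haar coefficient of each $x_k$, I would invoke the one-to-one and onto property of the Haar filterbank transform (as used in the proof of Theorem \ref{thm:modulation_HFT}) to conclude that each $x_k[n]$ is itself recovered.

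The main obstacle is bookkeeping in $\{\mathbb{Z}_2\}^I$ rather than any analytic difficulty: one must track that the per-signal shift $\bm{i}\mapsto\bm{i}+\bm{j}_k[n]$ is a time- and signal-dependent permutation of the $2^I$ subbands, and verify that the mutual-exclusivity hypothesis is precisely the condition guaranteeing that the nonzero source coefficients $v^{x_k}_{\bm{m}}[n]$ map injectively into the $2^I$ available subbands of $z$ at each $n$, so that no two collide---that is, so that no \emph{localized aliasing} in the sense of Corollary \ref{cor:subsampled} occurs. Making this injectivity explicit, and noting that the index pattern $\{\bm{j}_k[n]\}$ is known so that each recovered coefficient can be unambiguously attributed to the correct $x_k$, is the only point requiring care.
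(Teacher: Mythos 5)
Your proposal is correct and follows essentially the same route as the paper's own proof: pass to the Haar domain by linearity, apply Theorem \ref{thm:modulation_HFT} so that the Kronecker-delta coefficients of $y_k$ collapse the subband convolution to a single shifted term $v^{x_k}_{\bm{i}+\bm{j}_k[n]}[n]$, and use mutual exclusivity to read off each coefficient via the involutive shift $\bm{i}\mapsto\bm{i}+\bm{j}_k[n]$. If anything you are slightly more careful than the paper, which silently drops the $2^{-I}$ normalization factor from the definition of $\circledast$ (your $v^{x_k}_{\bm{m}}[n]=2^{I}v^{z}_{\bm{m}+\bm{j}_k[n]}[n]$ is the version consistent with the stated $v^{y_k}_{\bm{i}}[n]=\delta(\bm{i},\bm{j}_k[n])$), and which leaves the injectivity/attribution bookkeeping you spell out implicit.
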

\begin{proof}
By the subband convolution result of Theorem \ref{thm:modulation_HFT},
\begin{align*}
v^{z}_{\bm{i}}[n]=&\sum_k v^{x_ky_k}_{\bm{i}}[n]
=\sum_k\Big(v^{x_k}_{\bm{j}}\circledast_{\bm{j}}v^{y_k}_{\bm{j}}\Big)_{\bm{i}}[n]
=\sum_kv^{x_k}_{\bm{i}+\bm{j}_k[n]}[n]
\end{align*}
However, by the assumed mutual exclusivity of the supports of $v^{x_k}_{\bm{i}+\bm{j}_k[n]}[n]$, it follows that
\begin{align*}
v^z_{\bm{i}}[n]=
\left\{
\begin{array}{ll}
v^{x_k}_{\bm{i}+\bm{j}_k[n]}[n]&\text{if there exists $k$ 
with nonzero $v^{x_k}_{\bm{i}+\bm{j}_k[n]}$}\\
0&\text{otherwise,}
\end{array}
\right.
\end{align*}
and thus we conclude that $v^{x_k}_{\bm{i}}[n]=v^{z}_{\bm{i}+\bm{j}_k[n]}[n]$ whenever
$v^{v_k}_{\bm{i}}[n]$ is nonzero.
\end{proof}

\begin{figure}
\centering
\subfigure[$v^x_{\bm{i}}$]{\includegraphics[width=.075\textwidth]{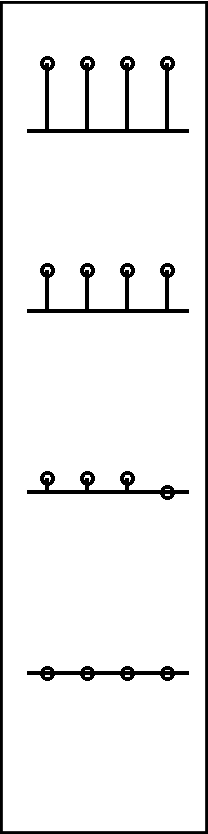}}
\subfigure[$v^y_{\bm{i}}$]{\includegraphics[width=.075\textwidth]{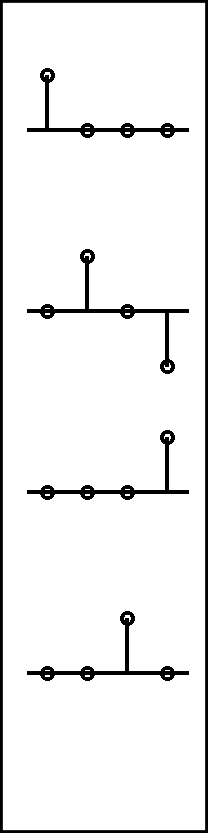}}
\subfigure[$v^{xy}_{\bm{i}}$]{\includegraphics[width=.075\textwidth]{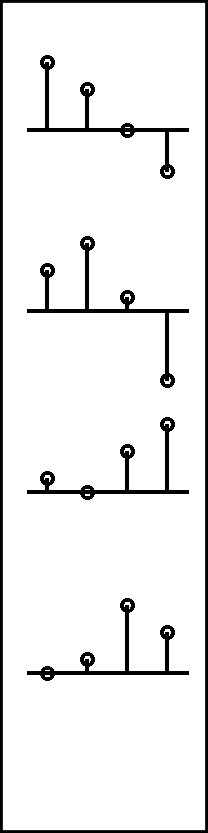}}
\caption{Pictorial illustration of \emph{localized modulation} in 2-level filterbank domain. Time-domain multiplication of $x$ and $y$, represented by their filterbank coefficients in (a) and (b), results in subband convolution, shown in (c); ``rewiring'' is evident in comparing (a) and (c). Exact recovery of $v^x_{\bm{i}}[n]$ from $v^{xy}_{\bm{i}}[n]$ is possible when the supports of $v^x_{\bm{i}+\bm{j}}[n]$ are mutually exclusive for all $\bm{i}$ and $\bm{j}$ corresponding to nonzero $v^x_{\bm{i}}[n]$ and $v^y_{\bm{j}}[n]$.
}
\label{fig:SCS}
\end{figure}
                                     
\section{Discussion: Implications for Signal Analysis}
\label{sec:likelihood}

The preceding two sections have explored properties of reverse-order and convolution subband structure (ROSS and SCS) in filterbanks, and shown their relation to the concepts of localized aliasing and modulation.  We now discuss the practical implications of these results for signal analysis, and provide two brief demonstrations that ``rewiring'' filterbank diagrams in the manner of ROSS and SCS can enable new solutions to problems involving subsampled data corrupted by additive or multiplicative noise.  As many scientific and engineering applications give rise to inverse problems involving subsampled and/or noisy data, and since filterbanks are the tool of choice for many signal and image processing tasks, it is natural to analyze the data likelihoods resulting from these problems directly in the filterbank coefficient domain.

\subsection{Filterbank-Domain Likelihoods via ROSS and SCS}

While data likelihoods often do not admit straightforward closed-form expressions through traditional filterbank analysis, the ROSS and SCS concepts provide a new way to characterize signals subject to aliasing or signal-dependent noise effects directly in the filterbank coefficient domain, by virtue of the associated filterbank rewiring techniques. The expression of filterbank data likelihoods is key to solving signal reconstruction and enhancement problems in this context; as signal acquisition models, these likelihoods may be coupled with regularization terms that encourages parsimony as a means of signal modeling, reflected through prior probability densities on filterbank coefficients, or equivalently through terms that explicitly penalize complexity.  To this end, the following two corollaries show how explicit likelihood formulations follow directly from application of the ROSS and SCS concepts introduced earlier.

\begin{corollary}[Noisy, Subsampled Data Likelihood]\label{cor:likelihood1}
Fix $x \in \ell^2(\mathbb{Z})$ as the signal of interest, and let $\xi$ comprise samples of white Gaussian noise of variance $\sigma^2$.  Suppose then that we observe \emph{subsampled, noisy} data $y = x_s + \xi_s$ and subsequently apply a \emph{unitary} filterbank transform; then it follows from the localized aliasing relation of Theorem \ref{thm:multi-level} that the analysis filterbank coefficients of $y$ satisfy 
\begin{align*}
v^y_{\bm{i}}[n]=&\frac{v^x_{\bm{i}}[n]+(-1)^{i_0}w^x_{\bm{i}'}[n]}{2}+\frac{v^{\xi}_{\bm{i}}[n]+(-1)^{i_0}w^{\xi}_{\bm{i}'}[n]}{2}\text{,}
\end{align*}
and hence each admits a Normal likelihood with mean $\big(v^x_{\bm{i}}[n]+(-1)^{i_0}w^x_{\bm{i}'}[n]\big)/2$ and variance $\sigma^2/2$.
\end{corollary}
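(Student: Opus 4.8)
The plan is to separate the deterministic signal contribution from the stochastic noise contribution, dispatch the former with the localized aliasing relation already in hand, and reduce the latter to a single variance computation whose success rests entirely on the unitary structure of the transform.

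First I would invoke linearity of the analysis filterbank. Since $y = x_s + \xi_s$, the coefficients split as $v^y_{\bm{i}}[n] = v^{x_s}_{\bm{i}}[n] + v^{\xi_s}_{\bm{i}}[n]$, and applying the multi-level localized aliasing identity of Theorem~\ref{thm:multi-level} to each summand separately yields
\begin{align*}
v^{x_s}_{\bm{i}}[n] &= \tfrac{1}{2}\big(v^x_{\bm{i}}[n] + (-1)^{i_0}w^x_{\bm{i}'}[n]\big),\\
v^{\xi_s}_{\bm{i}}[n] &= \tfrac{1}{2}\big(v^\xi_{\bm{i}}[n] + (-1)^{i_0}w^\xi_{\bm{i}'}[n]\big).
\end{align*}
This is exactly the asserted decomposition. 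Because $x$ is fixed and the samples of $\xi$ are zero-mean, the first term is deterministic and the second has mean zero; the claimed mean $\big(v^x_{\bm{i}}[n]+(-1)^{i_0}w^x_{\bm{i}'}[n]\big)/2$ follows immediately, and the Gaussian shape follows because the noise term $N_{\bm{i}}[n]:=\tfrac{1}{2}\big(v^\xi_{\bm{i}}[n]+(-1)^{i_0}w^\xi_{\bm{i}'}[n]\big)$ is a finite linear combination of the i.i.d.\ Gaussian samples of $\xi$ (the filters being finite-impulse-response).

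It then remains to show that $\mathrm{Var}\,N_{\bm{i}}[n] = \sigma^2/2$. Writing the $I$-level analysis as a unitary operator on $\ell^2(\mathbb{Z})$, orthonormality of its analysis atoms gives $\mathrm{Var}\,v^\xi_{\bm{i}}[n] = \sigma^2$; and since by Proposition~\ref{prop:complementary} the complementary filterbank is again a unitary perfect-reconstruction filterbank, the same reasoning gives $\mathrm{Var}\,w^\xi_{\bm{i}'}[n] = \sigma^2$. Expanding the variance of the average, the target $\tfrac14(\sigma^2+\sigma^2)=\sigma^2/2$ emerges precisely when the cross term $\mathrm{Cov}\big(v^\xi_{\bm{i}}[n],\,w^\xi_{\bm{i}'}[n]\big)$ vanishes.

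The hard part will be establishing that this covariance is zero, and here I would feed the ROSS structure back into the argument. Theorem~\ref{thm:reverse-order}, in its multi-level form, says that modulation sends subband $\bm{i}$ onto the complementary subband $\bm{i}'$; at the level of effective analysis atoms this identifies the modulated atom $M\phi_{\bm{i},n}$ (with $M$ denoting modulation by $\pi$) with the complementary atom for $\bm{i}'$ up to sign. Consequently the covariance is proportional to $\langle \phi_{\bm{i},n}, M\phi_{\bm{i},n}\rangle = \sum_k (-1)^k |\phi_{\bm{i},n}[k]|^2$, so the delicate step is to show that the energy of each analysis atom is split equally between even- and odd-indexed taps, forcing this alternating sum to vanish. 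This is exactly where unitarity is indispensable, and it is the step I expect to demand the most care: it holds transparently for the self-complementary Haar transform (each atom has one even and one odd tap of equal magnitude), and more generally amounts to an even/odd energy-balance condition on the orthonormal analysis filters. Granting this balance, uncorrelatedness follows, the variance collapses to $\sigma^2/2$, and combined with the mean and Gaussianity already obtained, the Normal likelihood is established.
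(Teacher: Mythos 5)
Your overall route---linearity, Theorem \ref{thm:multi-level} applied separately to $x_s$ and $\xi_s$, then a mean--variance computation for the Gaussian part---is precisely the argument the paper has in mind; Corollary \ref{cor:likelihood1} is stated without further proof, so your proposal supplies all the detail there is to supply, and you have correctly isolated the crux that the paper glosses over: $\mathrm{Var}\,v^{\xi_s}_{\bm{i}}[n]=\sigma^2/2$ holds precisely when $\mathrm{Cov}\big(v^{\xi}_{\bm{i}}[n],\,w^{\xi}_{\bm{i}'}[n]\big)$ vanishes, equivalently when the alternating energy sum $\sum_k(-1)^k\phi_{\bm{i},n}[k]^2$ of each effective analysis atom is zero. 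The genuine gap is the step you then ``grant'': this even/odd energy balance does \emph{not} follow from unitarity, and in the generality you (and the corollary) claim it is false. Unitarity only normalizes $\|\phi_{\bm{i},n}\|^2=1$; it imposes no constraint on how that energy splits between even and odd taps. Concretely, take the orthonormal Daubechies length-$4$ analysis filter $g_0=\frac{1}{4\sqrt{2}}\big(1+\sqrt{3},\,3+\sqrt{3},\,3-\sqrt{3},\,1-\sqrt{3}\big)$: its even-indexed energy is $(4-\sqrt{3})/8\approx 0.28$, so the one-level coefficient $v^{\xi_s}_0[n]=\sum_{m\ \mathrm{even}}g_0[2n-m]\,\xi[m]$ has variance $\sigma^2\sum_{l\ \mathrm{even}}g_0[l]^2\neq\sigma^2/2$. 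Thus the step you flagged as ``demanding the most care'' cannot be completed from unitarity alone; what survives in general is only an averaged statement, since by the Smith--Barnwell relation $g_1[n]\propto(-1)^n g_0[L-n]$ the even energies of $g_0$ and $g_1$ sum to one, making the noise variance averaged over a complementary pair of subbands equal to $\sigma^2/2$.

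The clean repair is to add the hypothesis under which the paper's surrounding development actually operates: self-complementarity, which Proposition \ref{rem:self-complementary} verifies for the Haar transform. Under Corollary \ref{cor:multi-level_haar} one has $w^{\xi}_{\bm{i}'}[n]=(-1)^{i_0}v^{\xi}_{\bm{i}'}[n]$, so the noise term collapses to $\tfrac12\big(v^{\xi}_{\bm{i}}[n]+v^{\xi}_{\bm{i}'}[n]\big)$, a combination of two coefficients of the \emph{same} unitary transform drawn from distinct subbands. These are inner products of the white noise $\xi$ against two distinct orthonormal atoms, hence jointly Gaussian, uncorrelated, and of variance $\sigma^2$ each, giving $\tfrac14(\sigma^2+\sigma^2)=\sigma^2/2$ with no tap-parity computation at all. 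In short, your instinct about where the difficulty lives is exactly right, but the difficulty is resolved by self-complementarity plus orthonormality of distinct atoms---not by unitarity alone---and absent some such balance hypothesis the per-subband variance claim, and with it your proof, fails for standard unitary filterbanks.
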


\begin{corollary}[Multiplicative Noise Data Likelihood]\label{cor:likelihood2}
 Suppose instead that the observation model $y[n] = x[n] + x[n] \xi[n]$ is in force.  Then the subband convolution structure of Theorem \ref{thm:modulation_HFT} implies the filterbank coefficient relation 
\begin{align*}
v^y_{\bm{i}}[n]=&v^x_{\bm{i}}[n]+\Big(v^x_{\bm{j}}[n]\circledast_{\bm{j}}v^{\xi}_{\bm{j}}[n]\Big)_{\bm{i}}\text{,}
\end{align*}
and hence the likelihood form of $v^y_{\bm{i}}[n]$ is multivariate Normal with mean $v^x_{\bm{i}}[n]$, where the covariance of $v^y_{\bm{i}}[n]$ and $v^y_{\bm{j}}[n]$ is given by $\sigma^2(v^x_{\bm{k}}\circledast_{\bm{k}}v^x_{\bm{k}})_{\bm{i}+\bm{j}}$.
\end{corollary}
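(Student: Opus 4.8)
The plan is to treat the observation model $y = x + x\xi$ as the sum of a deterministic signal and an element-wise product, and then transport both the first- and second-order statistics of the noise into the subband domain using Theorem \ref{thm:modulation_HFT}. First I would invoke the linearity of the analysis filterbank to write $v^y_{\bm{i}}[n] = v^x_{\bm{i}}[n] + v^{x\xi}_{\bm{i}}[n]$, and then apply the subband convolution result to the product $x\xi$ to obtain $v^{x\xi}_{\bm{i}}[n] = (v^x_{\bm{j}} \circledast_{\bm{j}} v^{\xi}_{\bm{j}})_{\bm{i}}[n]$, which is exactly the claimed coefficient relation. A crucial structural observation at this stage is that $\circledast$ acts only on the subband index $\bm{j}$ and is pointwise in the time index $n$, so that all randomness entering $v^y_{\bm{i}}[n]$ is carried by the noise coefficients $\{v^{\xi}_{\bm{j}}[n]\}_{\bm{j}}$ at the single time $n$.

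Next I would establish Gaussianity together with the mean. Since $\xi$ is white Gaussian and the analysis transform is linear, each $v^{\xi}_{\bm{j}}[n]$ is a zero-mean Gaussian and the collection $\{v^{\xi}_{\bm{j}}[n]\}_{\bm{j}}$ is jointly Gaussian. Treating $x$ as deterministic, $v^{x\xi}_{\bm{i}}[n] = 2^{-I}\sum_{\bm{j}} v^x_{\bm{i}+\bm{j}}[n]\,v^{\xi}_{\bm{j}}[n]$ is a fixed linear combination of these jointly Gaussian variables, so the entire family $\{v^y_{\bm{i}}[n]\}$ is multivariate Normal; taking expectations and using $\mathbb{E}[v^{\xi}_{\bm{j}}[n]] = 0$ gives mean $v^x_{\bm{i}}[n]$ at once.

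The covariance is where the real work lies. I would compute $\operatorname{Cov}(v^y_{\bm{i}}[n], v^y_{\bm{j}}[n]) = \mathbb{E}[v^{x\xi}_{\bm{i}}[n]\,v^{x\xi}_{\bm{j}}[n]]$ by expanding both convolution sums, producing a double sum over subband indices weighted by the noise cross-moments $\mathbb{E}[v^{\xi}_{\bm{a}}[n]\,v^{\xi}_{\bm{b}}[n]]$. Here I would use unitarity of the (normalized Haar) transform: an orthonormal map sends white noise to white noise, so these cross-moments collapse to $\sigma^2\,\delta(\bm{a},\bm{b})$, reducing the double sum to a single sum proportional to $\sum_{\bm{a}} v^x_{\bm{i}+\bm{a}}[n]\,v^x_{\bm{j}+\bm{a}}[n]$. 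The final step re-indexes this sum via $\bm{a} = \bm{k}+\bm{j}$, exploiting that addition in $\mathbb{Z}_2^I$ is its own inverse, so that $\bm{i}+\bm{a} = \bm{i}+\bm{j}+\bm{k}$ and the summand becomes $v^x_{(\bm{i}+\bm{j})+\bm{k}}[n]\,v^x_{\bm{k}}[n]$---precisely the subband self-convolution $(v^x_{\bm{k}} \circledast_{\bm{k}} v^x_{\bm{k}})_{\bm{i}+\bm{j}}[n]$, up to the normalizing constant carried by the chosen transform scaling.

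I expect the main obstacle to be pinning down the second-order statistics of $v^{\xi}_{\bm{j}}[n]$ in a manner consistent with the normalization under which Theorem \ref{thm:modulation_HFT} was proved. The subband convolution identity is stated for the unnormalized Haar transform of \eqref{eqn:HFT}, whereas the clean whiteness property $\mathbb{E}[v^{\xi}_{\bm{a}}[n]\,v^{\xi}_{\bm{b}}[n]] = \sigma^2\,\delta(\bm{a},\bm{b})$ requires the orthonormal version; reconciling these two conventions is what fixes the scalar multiplying the self-convolution. A secondary point to verify is that cross-time covariances vanish, which again follows from the whiteness of $\xi$ together with the pointwise-in-$n$ nature of subband convolution.
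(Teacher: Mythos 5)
Your proposal is correct, and it is essentially the argument the paper intends: the corollary is stated without a separate proof, as an immediate consequence of linearity of the analysis transform (giving $v^y_{\bm{i}}=v^x_{\bm{i}}+v^{x\xi}_{\bm{i}}$), Theorem \ref{thm:modulation_HFT} applied to the product $x\xi$, and the observation that for fixed $x$ each $v^y_{\bm{i}}[n]$ is an affine function of the Gaussian sequence $\xi$, hence jointly Normal with mean $v^x_{\bm{i}}[n]$. Your covariance computation---expand both convolution sums, collapse the noise cross-moments, re-index using the self-inverse addition on $\mathbb{Z}_2^I$---is exactly the bookkeeping the paper leaves implicit.

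The one point you left open, the scalar fixed by the normalization convention, resolves cleanly in favor of the stated formula, and under the conventions of Theorem \ref{thm:modulation_HFT} itself rather than the orthonormal ones. For the unnormalized Haar filters of \eqref{eqn:HFT}, for which the subband convolution identity is proved, the effective level-$I$ analysis sequences are $\pm 1$-valued Walsh sequences supported on disjoint length-$2^I$ blocks (this is the same Walsh structure the paper invokes in its Logical Convolution remark), so whiteness of $\xi$ gives
\begin{align*}
\mathbb{E}\Big[v^{\xi}_{\bm{a}}[n]\,v^{\xi}_{\bm{b}}[m]\Big]=2^I\sigma^2\,\delta(\bm{a},\bm{b})\,\delta(n,m)\text{,}
\end{align*}
not $\sigma^2\delta(\bm{a},\bm{b})$, which would hold only for the orthonormal version. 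Carrying this through your double sum, the factor $2^{-2I}$ contributed by the two convolution normalizations combines with $2^I\sigma^2$ to give $2^{-I}\sigma^2\sum_{\bm{a}}v^x_{\bm{i}+\bm{a}}[n]\,v^x_{\bm{j}+\bm{a}}[n]$, and your substitution $\bm{a}=\bm{j}+\bm{k}$ then yields exactly $\sigma^2\big(v^x_{\bm{k}}\circledast_{\bm{k}}v^x_{\bm{k}}\big)_{\bm{i}+\bm{j}}$ with no residual constant. (Had you instead insisted on the orthonormal whiteness $\sigma^2\delta(\bm{a},\bm{b})$, the convolution identity would have to be restated with a $2^{-I/2}$ normalization, and a stray factor would appear; the corollary as written is exact precisely in the unnormalized convention that its $\circledast$ inherits from Theorem \ref{thm:modulation_HFT}.) Your secondary check also goes through: since $\circledast$ is pointwise in $n$ and the Walsh blocks for distinct $n$ have disjoint supports, cross-time covariances vanish, consistent with the corollary quoting covariances only at a common time index.
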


\subsection{Proof of Concept: Application to Image Interpolation and Denoising}

The likelihood expressions of Corollaries \ref{cor:likelihood1} and \ref{cor:likelihood2} above extend straightforwardly to the case of separable two-dimensional filterbank transforms, which in turn are typically employed in imaging applications.  Thus, to illustrate the practical applicability of these results, we now undertake two proof-of-concept experiments that are representative of problems frequently encountered in digital imaging, and for which our ROSS and SCS characterizations---in contrast to standard approaches---yield closed-form likelihood expressions for the corresponding filterbank coefficients.

The first of these---image interpolation in the presence of noise---is made difficult by the fact that low- and high-frequency filterbank subbands interact with one another as well as with the noise itself; the ``rewiring'' expression of Corollary \ref{cor:likelihood1} in turn provides a closed-form likelihood expression for the filterbank coefficients.  In the second experiment, we consider the similarly difficult problem of mitigating \emph{multiplicative} noise; in this case, Corollary \ref{cor:likelihood2} yields the corresponding likelihood.

As filterbank coefficients of images typically exhibit sparsity\cite{ref:Crouse_1998}, one natural approach to utilize these likelihoods through a Bayesian framework, in which transform coefficients of the underlying image $x$ are modeled as random variables $v_{\bm{i}}^X[n]$ taking zero-mean, symmetric, and unimodal ``heavy-tailed'' distributions that exhibit super-Gaussian tail behavior\cite{ref:Portilla_2003,ref:Selesnick_2006}.  We assume such a prior distribution here, and evaluate posterior means numerically via Monte Carlo averages.

We follow typical practice in approximating the overall joint posterior distribution of all filterbank coefficients by a product of marginal distributions, where each marginal posterior is associated with a particular subband.  In turn, the $\ell^2$-optimal estimator of filterbank coefficients $v_{\bm{i}}^X[n]$ corresponding to the $\bm{i}$th subband given the corresponding data coefficients $v_{\bm{i}}^y[n]$ is given by
\begin{align}\label{eqn:mmse}
\mathbb{E}[v^X_{\bm{i}}|v^y_{\bm{i}}]
=\frac{\int v^x_{\bm{i}} p(v^y_{\bm{i}}|v^x_{\bm{i}})p(v^x_{\bm{i}})dv^x_{\bm{i}}}{p(v^y_{\bm{i}})}\text{;}
\end{align}
the corresponding synthesis filterbank in turn allows reconstruction of the estimated image.

We first consider a well-known 8-bit test image that has been artificially downsampled and degraded with additive white Gaussian noise of variance 400 to yield a signal-to-noise ratio (SNR) of 16.77~dB relative to $x_s$ (or 1.22~dB relative to $x$), as shown in Figures \ref{fig:ex01}(a) and \ref{fig:ex01}(b).  Figure \ref{fig:ex01}(c) shows the corresponding image reconstruction, which retains much of sharpness of the image edges and textures while suppressing noise, resulting in an SNR gain of 3.81~dB.  For purposes of comparison, Figure \ref{fig:ex01}(d) shows the result of the recently proposed simultaneous interpolation and denoising method of \cite{ref:Zhang_2008}, which yields an SNR gain of  3.55~dB; this reconstruction exhibits edges that are more strongly preserved, but at the expense of greater smoothing of image textures.

In our second experiment, we consider a synthetic aperture radar (SAR) image, available at www.sandia.gov/radar; as may be seen from Figure \ref{fig:ex02}(a), such images suffer from the effects of multiplicative noise \cite{ref:Achim_2003}.  Figure \ref{fig:ex02}(b) shows the enhanced image resulting from a ``rewiring'' approach, which exhibits reduced noise in smooth and textured regions, and avoids the introduction of artifacts.  In contrast, Figure \ref{fig:ex02}(c) illustrates the standard approach: application of a logarithmic transformation to the data, followed by an additive denoising technique (here the well-known method of \cite{ref:Portilla_2003}) and subsequent exponentiation of the result.  Not only are Bayes optimality properties of \cite{ref:Portilla_2003} lost in the exponentiation transformation back to the pixel domain, but also substantial artifacts are seen to result from this standard approach.

\begin{figure*}
\centering
\subfigure[Original test image]{\includegraphics[width=.45\textwidth]{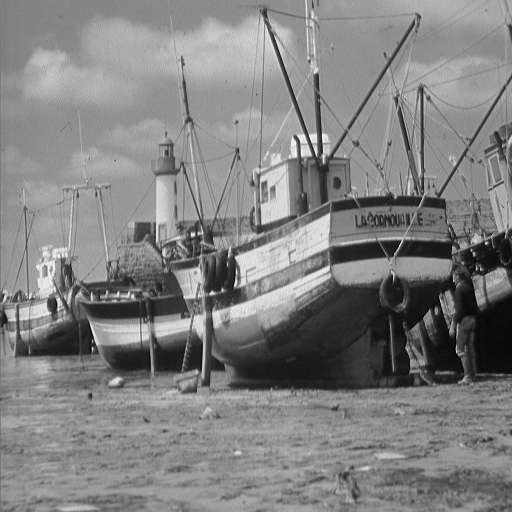}}
\subfigure[Subsampled, noisy test image]{\includegraphics[width=.45\textwidth]{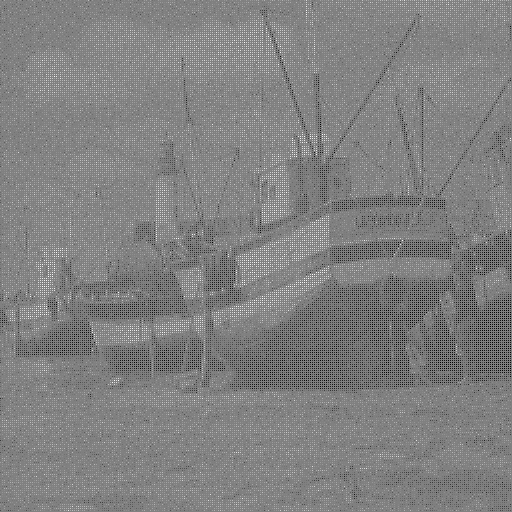}}\\
\subfigure[``Rewiring'' reconstruction]{\includegraphics[width=.45\textwidth]{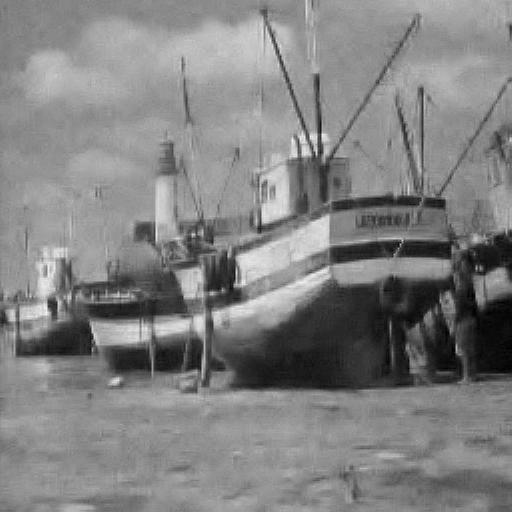}}
\subfigure[Reconstruction via the method of \cite{ref:Zhang_2008}]{\includegraphics[width=.45\textwidth]{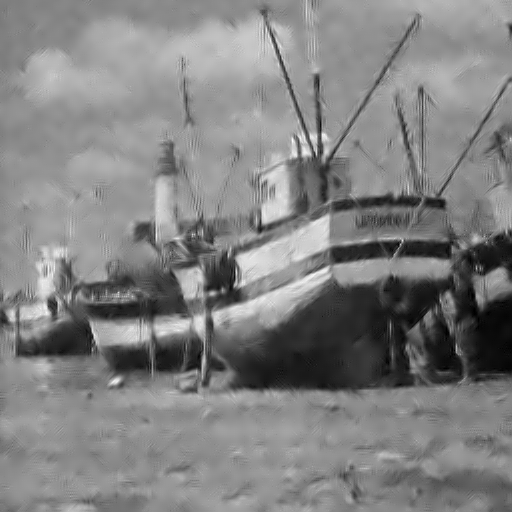}}
\caption{
Example of noisy image interpolation via the ROSS technique of Section \ref{sec:ROSS}: The first row shows a full-resolution 8-bit test image (a), along with a subsampled version that has been degraded with additive white Gaussian noise of variance 400 (b).  The bottom row shows a posterior mean reconstruction based on the filterbank-domain likelihood of Corollary \ref{cor:likelihood1} and a heavy-tailed prior distribution on filterbank coefficients (c), along with a reconstruction according to the recently proposed method of \cite{ref:Zhang_2008}, shown for comparison (d).}
\label{fig:ex01}
\end{figure*}

\begin{figure*}
\subfigure[SAR image showing ``speckled noise'']{\includegraphics[width=.33\textwidth]{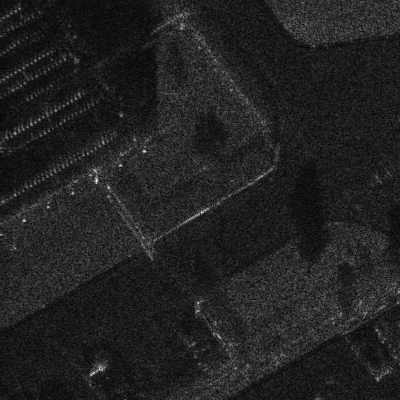}}
\subfigure[``Rewiring'' enhancement]{\includegraphics[width=.33\textwidth]{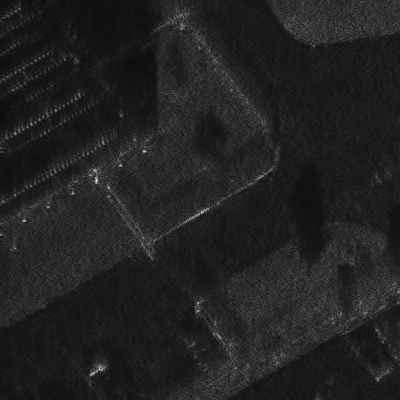}}
\subfigure[Enhancement via the method of \cite{ref:Portilla_2003}, applied after logarithmic transformation]{\includegraphics[width=.33\textwidth]{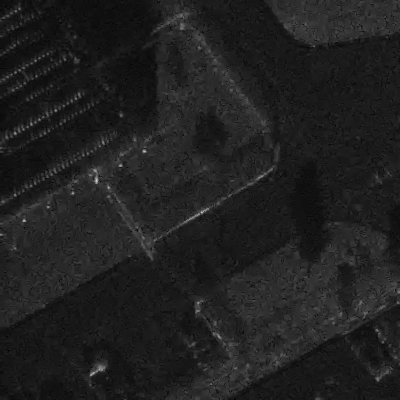}}
\caption{
Example of multiplicative noise mitigation via the SCS technique of Section \ref{sec:SCS}: Panel (a) shows a portion of SAR imagery data in which noise is visibly present.  Panel (b) shows an enhancement based on the filterbank-domain likelihood of Corollary \ref{cor:likelihood2} and a heavy-tailed prior distribution on filterbank coefficients, with panel (c) showing for comparison an enhancement according to the method of \cite{ref:Portilla_2003}, designed for additive noise and applied after a variance-stabilizing logarithmic transformation.}
\label{fig:ex02}
\end{figure*}
\subsection{Concluding Remarks}

In conclusion, we have shown in this article how filterbank ``rewirings,'' corresponding to compositions of convolution, modulation, and downsampling operators, admit expressions of \emph{localized} aliasing and modulation, in directly analogy to the global setting of Fourier analysis.  In addition to establishing a number of results that formalize reverse-order and convolution subband structures in filterbank transforms in Sections \ref{sec:ROSS} and \ref{sec:SCS}, respectively, we have demonstrated in this section how these concepts in turn enable the establishment of closed-form likelihood functions for the direct filterbank analysis of signals subject to degradations such as
missing data, spatially or temporally multiplexed data acquisition, or signal-dependent noise, such as are often encountered in practical signal processing applications.

\section*{Acknowledgment}

The authors would like to thank Sandia National Laboratories for
generously providing access to the synthetic aperture radar data; and Drs.~Lei Zhang, Xin Li, and Javier Portilla  for
kindly providing their code for the purpose of comparative evaluation.

\end{document}